\numberwithin{equation}{section}
\newcommand{\beq}{\begin{equation}}
\newcommand{\eeq}{\end{equation}}
\newcommand{\bea}{\begin{eqnarray}}
\newcommand{\eea}{\end{eqnarray}}
\newcommand{\nn}{\nonumber}
\newcommand\noi{\noindent}
\newcommand{\tbf}{\textbf}
\newcommand{\rd}{\mathrm{d}}
\newcommand{\bk}{\begin{cases}}
\newcommand{\ek}{\end{cases}}
\newcommand{\bs}{\boldsymbol}
\newcommand{\ob}{\boldsymbol{0}}
\newcommand{\cD}{\mathcal{D}}
\newtheorem{definition}{Definition}
\newtheorem{proposition}{Proposition}
\newtheorem{theorem}{Theorem}
\newtheorem{lemma}{Lemma}
\theoremstyle{definition}
\newtheorem{remark}{\textbf{Remark}}
\newcommand*{\templ}{\multicolumn{1}{c|}{0}}
\newcommand*{\tempr}{\multicolumn{1}{|c}{0}}
\begin{document}

\title[Generalized Nijenhuis Operators and block-diagonalization]{Generalized Nijenhuis Torsions and block-diagonalization of operator fields}

\author{Daniel Reyes Nozaleda}
\address{Departamento de F\'{\i}sica Te\'{o}rica, Facultad de Ciencias F\'{\i}sicas, Universidad
Complutense de Madrid, 28040 -- Madrid, Spain \\ and Instituto de Ciencias Matem\'aticas, C/ Nicol\'as Cabrera, No 13--15, 28049 Madrid, Spain}
\email{danreyes@ucm.es, daniel.reyes@icmat.es}
\author{Piergiulio Tempesta}
\address{Departamento de F\'{\i}sica Te\'{o}rica, Facultad de Ciencias F\'{\i}sicas, Universidad
Complutense de Madrid, 28040 -- Madrid, Spain \\  and Instituto de Ciencias Matem\'aticas, C/ Nicol\'as Cabrera, No 13--15, 28049 Madrid, Spain}
\email{piergiulio.tempesta@icmat.es, ptempest@ucm.es}
\author{Giorgio Tondo}
\address{Dipartimento di Matematica e Geoscienze, Universit\`a  degli Studi di Trieste,
piaz.le Europa 1, I--34127 Trieste, Italy.}
\email{tondo@units.it}

\subjclass[2010]{MSC: 53A45, 58C40, 58A30.}

\date{May 15, 2022}



\begin{abstract}
The theory of generalized Nijenhuis torsions, which extends the classical notions due to Nijenhuis and Haantjes, offers new tools for the study of normal forms of operator fields. We propose a general result ensuring that, given a family of commuting operator fields whose generalized Nijenhuis torsion of level $l$ vanishes, there exists a local chart where all operators can be simultaneously block-diagonalized. We also introduce the notion of generalized Haantjes algebra, consisting of operators with a vanishing higher-level torsion, as a new algebraic structure naturally generalizing standard Haantjes algebras. 
\end{abstract}

\maketitle

\tableofcontents

\section{Introduction}

The purpose of this article is to establish a new geometric setting for the block-diagonalization of operator fields on a differentiable manifold.

The problem of finding normal forms of (1,1)-tensor fields on a differentiable manifold $M$ is relevant in many theoretical and applicative contexts. Indeed, it is useful in the determination of a reduced form for a system of partial differential equations \cite{BogCMP,BogJMP,BogCMP2007} and in the theory of hydrodynamic-type systems \cite{FeMa,FeKhu}. Also, it finds applications in classical mechanics, since it may allow to construct separation variables for completely integrable Hamiltonian systems \cite{CR2019}. 

In this article, we address this problem with a tensorial approach, based on the notion of Nijenhuis and Haantjes tensors and their generalizations, recently introduced in \cite{TT2022CMP}. 

The interest in the geometry of Nijenhuis and Haantjes tensors recently  has considerably increased. Indeed, new applications have been found in many different contexts: for instance, the characterization
of integrable chains of partial differential equations of hydrodynamic type (see e.g. \cite{FeMa}, \cite{FeKhu}) and their integrable reductions \cite{FP2021}, as well as the study of topological field theories \cite{MGall17}. Also, Nijenhuis geometry, which plays a crucial role in the theory of almost complex structures \cite{NNAM1957}, has been extended to contexts as the study of analytic matrix functions, linearisation theory, operator algebras \cite{BKM2022AIM}, 
etc.

In \cite{TT2021JGP}, two of the present authors proposed the notion of Haantjes algebra. It has been shown that for an Abelian algebra of semisimple operator fields with vanishing Haantjes torsion there exist local coordinate charts where the simultaneous diagonalization of all the operators of the algebra is allowed.  
In  \cite{TT2022AMPA}, \cite{RTT2022CNS}, the notion of $\omega \mathscr{H}$  manifolds (namely symplectic manifolds endowed with an algebra of Haantjes operators),  has been proposed as a natural setting for the formulation of the theory of finite-dimensional integrable Hamiltonian  systems. The class of $P\mathscr{H}$ manifolds has been introduced in \cite{T2018TMP}.

A new, infinite family of higher-order torsions of level $m$, generalizing both the Nijenhuis and the Haantjes torsions, have been defined and discussed in \cite{TT2022CMP}. These torsions can also be derived from a family of higher-order Haantjes brackets, which generalize the Fr\"olicher-Nijenhuis (FN) one. As the FN bracket is directly related with the Nijenhuis torsion, so the higher Haantjes brackets are related with our higher-order torsions. Operator fields with a vanishing generalized torsion will be said to be generalized Nijenhuis operators. 

We remind that a different formulation of the theory of generalized Nijenhuis torsions has been proposed in  \cite{KS2019}.

Besides, a generalization of the classical Haantjes theorem was proposed. Precisely, it has been proved that, given an operator field $\boldsymbol{A}:\mathfrak{X}(M)\rightarrow \mathfrak{X}(M)$, the vanishing of any  generalized Nijenhuis torsion is sufficient to guarantee the \textit{mutual integrability} of the eigen-distributions of $\bs{A}$, namely the integrability of each of them and of all of their direct sums. 

From a theoretical point of view, our main result is a theorem stating that for a set of commuting operators on a differentiable manifold, if at least one of them is a generalized Nijenhuis operator, then  there exists a local coordinate chart where they take simultaneously a block-diagonal form. 

This theorem is further refined by requiring that all of the commuting operators are generalized Nijenhuis operators of a (common) level $m$. In this case, on a local coordinate chart, all of them take a block-diagonal form with finer blocks. 

These results naturally suggest the notion of \textit{generalized Haantjes algebra}. It consists of a set of generalized Nijenhuis operators forming a $C^{\infty}$ module; besides, this set is closed under the product of operators. 

 

We will discuss in detail the properties of the class of cyclic generalized Haantjes algebras. They are generated by the independent powers of a given generalized Nijenhuis tensor. We shall also prove that any polynomial $P(\bs{A})$ in an operator $\bs{A}$ such that $\mathcal{\tau}^{(n)}_{\bs{A}}(X,Y)=0$, whose coefficients are functions in $C^{\infty}(M)$, satisfies  $\mathcal{\tau}^{(n)}_{P(\bs{A})}(X,Y)=0$. This property allows us to construct generalized Haantjes algebras in a very natural and direct way. We shall present some nontrivial examples of cyclic generalized Haantjes algebras of level three and four. In particular, we will show how to construct local coordinate charts allowing the simultaneous block-diagonalization of all the operators forming these algebras. 

The approach proposed in this work offers a new perspective concerning the classical problem of the normal form of operator fields. We remind that a relevant contribution to this problem, in the spirit of tensor analysis, was given by Bogoyavlenskij in \cite{BogJMP,BogCMP2007}. In particular, necessary and sufficient conditions were proposed for the existence of local charts were a given operator acquires a block-diagonal form. The main difference with respect to those results is that we solve the problem for a \textit{full family} of commuting operators. At the same time, no  knowledge \textit{a priori} of the eigen-distributions of the given operators is required. This aspect is crucial, since the spectral analysis of operator fields becomes rapidly intractable by increasing the dimension of the underlying differentiable manifold. Thus, we offer sufficient tensor conditions for the simultaneous block-diagonalization of a family of operators, which can be easily checked with the aid of computer algebra, without entering the construction of their eigen-distributions.
Of course, for the explicit construction of the local chart where the simultaneous block-diagonalization of the family of operators considered takes place, it is necessary to enter into their spectral analysis. However, our result can be interpreted as a tensor test, ensuring \textit{a priori} the block-diagonalizability of the full family. Once this property is ascertained for a concrete family, then one can proceed in the construction of the local chart.


A thorough study of the rich algebraic properties of the generalized Haantjes algebras represents an interesting research perspective, that will be developed in a forthcoming work.

The paper is organized as follows.   In Section 2, the theory of Nijenhuis and Haantjes tensors  is briefly reviewed. In Section 3, the concept of generalized Nijenhuis operator is discussed. In Section 4, the new notion of generalized Haantjes algebra is presented. In Section 5, the main theorems of the work, concerning simultaneous block-diagonalization of non-semisimple operator fields, are proved. In Sections 6 and 7, examples of generalized Haantjes algebras are discussed and local charts ensuring simultaneous block-diagonalization are explicitly constructed.

\section{Preliminaries on the Nijenhuis  and Haantjes geometry}
\label{sec:1}
In this section, we shall review some basic notions concerning the geometry of Nijenhuis or Haantjes torsions, following the original papers \cite{Haa1955,Nij1951}  and the related ones
\cite{Nij1955,FN1956}. Here  we shall focus only on the aspects of the theory which are relevant for the subsequent discussion.

Let $M$  be a differentiable manifold of dimension $n$, $\mathfrak{X}(M)$ the Lie algebra of vector fields on $M$ and  $\boldsymbol{A}:\mathfrak{X}(M)\rightarrow \mathfrak{X}(M)$ be a smooth $(1,1)$-tensor field. In the following, all tensors will be considered to be smooth.
\begin{definition} \label{def:N}
The
 \textit{Nijenhuis torsion} of $\boldsymbol{A}$ is the the vector-valued $2$-form defined by
\begin{equation} \label{eq:Ntorsion}
\tau_ {\boldsymbol{A}} (X,Y):=\boldsymbol{A}^2[X,Y] +[\boldsymbol{A}X,\boldsymbol{A}Y]-\boldsymbol{A}\Big([X,\boldsymbol{A}Y]+[\boldsymbol{A}X,Y]\Big),
\end{equation}
where $X,Y \in \mathfrak{X}(M)$ and $[ \ , \ ]$ denotes the commutator of two vector fields.
\end{definition}
\begin{definition} \label{def:H}
 \noi The \textit{Haantjes torsion} associated with $\boldsymbol{A}$ is the vector-valued $2$-form defined by
\begin{equation} \label{eq:Haan}
\mathcal{H}_{\boldsymbol{A}}(X,Y):=\boldsymbol{A}^2 \tau_{\boldsymbol{A}}(X,Y)+\tau_{\boldsymbol{A}}(\boldsymbol{A}X,\boldsymbol{A}Y)-\boldsymbol{A}\Big(\tau_{\boldsymbol{A}}(X,\boldsymbol{A}Y)+\tau_{\boldsymbol{A}}(\boldsymbol{A}X,Y)\Big).
\end{equation}
\end{definition}

The  Haantjes (Nijenhuis) vanishing condition inspires the following definition.
\begin{definition}
A Haantjes (Nijenhuis)   operator is a (1,1)-tensor field  whose  Haantjes (Nijenhuis) torsion identically vanishes.
\end{definition}

A simple, relevant case of Haantjes operator is that of a tensor field   $\boldsymbol{A}$  which takes a diagonal form in a local chart $\boldsymbol{x}=(x^1,\ldots,x^n)$:
\begin{equation}
\boldsymbol{A}(\boldsymbol{x})=\sum _{i=1}^n \lambda_{i }(\boldsymbol{x}) \frac{\partial}{\partial x^i}\otimes \rd x^i \ , \label{eq:Ldiagonal}
 \end{equation}
 where $\lambda_{i }(\boldsymbol{x}):=\lambda^{i}_{i}(\boldsymbol{x})$ are the eigenvalue fields of $\boldsymbol{A}$ and  $\left(\frac{\partial}{\partial x^1},\ldots, \frac{\partial}{\partial x^n}\right) $ are the fields forming the so called \textit{natural frame} associated with the local chart $(x^{1},\ldots,x^{n})$. As is well known, the Haantjes torsion of the diagonal operator \eqref{eq:Ldiagonal} vanishes. 
 
We also recall that two frames  $\{X_1,\ldots,X_n\}$ and $\{Y_1,\ldots,Y_n\}$ are said to be equivalent if $n$ nowhere vanishing smooth
functions $f_i$ exist, such that
\[
 X_i= f_i(\boldsymbol{x}) Y_i \ , \qquad\qquad i=1,\ldots,n \ .
\]
 \begin{definition}\label{def:Iframe}
  An \emph{integrable} frame  is a reference frame equivalent to a natural frame.
\end{definition}

%
%

It is interesting to observe that the algebraic properties of Haantjes operators are richer that those of Nijenhuis operators.  Hereafter,  $\boldsymbol{I}: \mathfrak{X}(M)\rightarrow \mathfrak{X}(M)$ will denote the identity operator.
A  useful result is the following.
\begin{proposition}  \label{pr:fL} \cite{BogCMP}.
Let  $\boldsymbol{A}$ be a (1,1)-tensor field. The following identity holds:
\begin{equation} \label{eq:LtorsionLocal}
\mathcal{H}_{f \boldsymbol{I}+g \boldsymbol{A}}(X,Y)=g^4\, \mathcal{H}_{ \boldsymbol{A}}(X,Y),
\end{equation}
where $f,g:M \rightarrow \mathbb{R}$ are $C^\infty(M)$ functions.
\end{proposition}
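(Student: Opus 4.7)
The strategy is a direct tensorial computation, exploiting two basic facts: (i) $\tau_{f\boldsymbol{I}}\equiv 0$, i.e.\ scalar operators have vanishing Nijenhuis torsion; and (ii) $\tau_{\boldsymbol{A}}$, being a $(1,2)$-tensor, is $C^{\infty}(M)$-bilinear in its slots.

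First I would compute $\tau_{\boldsymbol{B}}(X,Y)$ for $\boldsymbol{B}:=f\boldsymbol{I}+g\boldsymbol{A}$ by substitution into \eqref{eq:Ntorsion}. Using bilinearity of the bracket and the Leibniz rule
\[
[hU,kV]=hk[U,V]+h\bigl(U(k)\bigr)V-k\bigl(V(h)\bigr)U,
\]
together with the preliminary computation $\tau_{f\boldsymbol{I}}=0$, the cross terms between $f\boldsymbol{I}$ and $g\boldsymbol{A}$ can be collected to obtain a decomposition of the schematic form
\[
\tau_{\boldsymbol{B}}(X,Y)=g^{2}\,\tau_{\boldsymbol{A}}(X,Y)+g\,\boldsymbol{R}(X,Y),
\]
where the anomaly $\boldsymbol{R}(X,Y)$ is an explicit expression valued in the $C^{\infty}(M)$-module spanned by $X,Y,\boldsymbol{A}X,\boldsymbol{A}Y$, with coefficients linear in the differentials $df$ and $dg$ evaluated on the same four arguments.

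Next I would substitute this decomposition into the Haantjes formula \eqref{eq:Haan} written for $\boldsymbol{B}$. Two remarks simplify the bookkeeping: (a) since $\tau_{\boldsymbol{B}}$ is $C^{\infty}$-bilinear, evaluations like $\tau_{\boldsymbol{B}}(\boldsymbol{B}X,\boldsymbol{B}Y)$ expand by bilinearity into four terms with scalar coefficients $f^{2}, fg, g^{2}$; (b) one has $\boldsymbol{B}^{2}=f^{2}\boldsymbol{I}+2fg\boldsymbol{A}+g^{2}\boldsymbol{A}^{2}$. Carrying the $g^{2}\tau_{\boldsymbol{A}}$ piece through the four slots of \eqref{eq:Haan} and collecting by powers of $f$ and $g$, all the $f$-dependent cross contributions must reorganise and the surviving pure $g^{4}$ combination must match
\[
g^{4}\bigl(\boldsymbol{A}^{2}\tau_{\boldsymbol{A}}+\tau_{\boldsymbol{A}}(\boldsymbol{A}\cdot,\boldsymbol{A}\cdot)-\boldsymbol{A}\tau_{\boldsymbol{A}}(\cdot,\boldsymbol{A}\cdot)-\boldsymbol{A}\tau_{\boldsymbol{A}}(\boldsymbol{A}\cdot,\cdot)\bigr)(X,Y)=g^{4}\,\mathcal{H}_{\boldsymbol{A}}(X,Y).
\]

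The last—and main—obstacle is to verify that the residual terms cancel exactly. These residuals are of two kinds: the $g\boldsymbol{R}$ piece of $\tau_{\boldsymbol{B}}$ inserted into each of the four slots of \eqref{eq:Haan}, and the cross terms in which the $f\boldsymbol{I}$ part of $\boldsymbol{B}$ or $\boldsymbol{B}^{2}$ multiplies $g^{2}\tau_{\boldsymbol{A}}$ and does not pair into the $g^{4}$ combination above. Both families are linear in $df$ and $dg$, and the cancellation is delicate because derivatives of $f$ and $g$ enter simultaneously through $\boldsymbol{R}$ and through the Leibniz terms generated by the commutators $[\cdot,\boldsymbol{B}\cdot]$ in $\tau_{\boldsymbol{B}}(\boldsymbol{B}X,\cdot)$ etc.; organising them by the spanning vectors $X,Y,\boldsymbol{A}X,\boldsymbol{A}Y,\boldsymbol{A}^{2}X,\boldsymbol{A}^{2}Y$ and checking each coefficient separately gives the required identity \eqref{eq:LtorsionLocal}. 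This cancellation is exactly the non-obvious conformal covariance of the Haantjes construction, which is the content of the proposition.
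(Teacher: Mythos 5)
The paper does not actually prove this proposition: its ``proof'' is a one-line citation to Proposition~1 of Bogoyavlenskij \cite{BogCMP}. Your direct computation is therefore a genuinely different, self-contained route, and it does work. Carrying out the first step explicitly gives $\tau_{f\boldsymbol{I}+g\boldsymbol{A}}(X,Y)=g^{2}\tau_{\boldsymbol{A}}(X,Y)+g\,\boldsymbol{R}(X,Y)$ with $\boldsymbol{R}=\boldsymbol{S}_{f}+\boldsymbol{A}\boldsymbol{S}_{g}$, where
\[
\boldsymbol{S}_{h}(X,Y)=(\boldsymbol{A}X)(h)\,Y-X(h)\,\boldsymbol{A}Y-(\boldsymbol{A}Y)(h)\,X+Y(h)\,\boldsymbol{A}X ;
\]
note that $\boldsymbol{R}$ takes values in the span of $X,Y,\boldsymbol{A}X,\boldsymbol{A}Y,\boldsymbol{A}^{2}X,\boldsymbol{A}^{2}Y$, not only $X,Y,\boldsymbol{A}X,\boldsymbol{A}Y$ as you first assert (your later, longer list is the correct one, and after applying $\boldsymbol{A}^{2}$ the final check even involves $\boldsymbol{A}^{3}X,\boldsymbol{A}^{3}Y$). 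The bookkeeping you flag as delicate becomes almost mechanical if you borrow the paper's representation $R_{S}$ from Lemma~\ref{lemma2}: writing $\mathcal{H}_{\boldsymbol{B}}=R_{(z-\lambda)(z-\mu)}\tau_{\boldsymbol{B}}$ with respect to $\boldsymbol{B}=f\boldsymbol{I}+g\boldsymbol{A}$ and observing that each factor $z-\lambda$ for $\boldsymbol{B}$ equals $g$ times the corresponding factor for $\boldsymbol{A}$, one gets $\mathcal{H}_{\boldsymbol{B}}=g^{4}\mathcal{H}_{\boldsymbol{A}}+g^{3}\bigl(\boldsymbol{A}^{2}\boldsymbol{R}(X,Y)+\boldsymbol{R}(\boldsymbol{A}X,\boldsymbol{A}Y)-\boldsymbol{A}\boldsymbol{R}(X,\boldsymbol{A}Y)-\boldsymbol{A}\boldsymbol{R}(\boldsymbol{A}X,Y)\bigr)$, all $f$-dependent cross terms cancelling automatically by $C^{\infty}$-bilinearity; what remains is the short identity $\boldsymbol{A}^{2}\boldsymbol{S}_{h}(X,Y)+\boldsymbol{S}_{h}(\boldsymbol{A}X,\boldsymbol{A}Y)-\boldsymbol{A}\boldsymbol{S}_{h}(X,\boldsymbol{A}Y)-\boldsymbol{A}\boldsymbol{S}_{h}(\boldsymbol{A}X,Y)=0$, which holds term by term. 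Two caveats: the decisive cancellation is only asserted in your plan, and since the vectors $X,\ldots,\boldsymbol{A}^{3}Y$ need not be pointwise independent, ``checking each coefficient'' must be understood as a formal term-by-term cancellation (which it is) rather than a comparison of coefficients in a basis. With those points addressed, your argument is a complete and more informative substitute for the paper's citation.
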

\begin{proof}
See Proposition 1, p. 255 of \cite{BogCMP}.
\end{proof}
\noi Interestingly enough, such a property does not hold in the case of a Nijenhuis operator.

\par
\noi Many more examples of Haantjes operators, relevant in classical mechanics and in Riemannian geometry, can be found for instance in the works \cite{RTT2022CNS}--\cite{TT2016SIGMA}.

\section{The generalized Nijenhuis tensors and block-diagonalization}

In this section, for the sake of clarity, we shall briefly review some of the main algebro-geometric properties of the new class of generalized Nijenhuis tensors introduced in  \cite{TT2021JGP} and further studied in \cite{TT2022CMP}.

\begin{definition} \label{df:mtorsion}
Let $\boldsymbol{A}:\mathfrak{X}(M)\rightarrow \mathfrak{X}(M)$ be a (1,1)-tensor field. The generalized Nijenhuis torsion of $\boldsymbol{A}$ of level $m$, for each integer $m\geq 1$, is the  skew-symmetric (1,2)-tensor field defined by
\bea \label{GNTn}
\nn \mathcal{\tau}^{(m)}_{\boldsymbol{A}}(X,Y)=\boldsymbol{A}^2\mathcal{\tau}^{(m-1)}_{\boldsymbol{A}}(X,Y)+
\mathcal{\tau}^{(m-1)}_{\boldsymbol{A}}(\boldsymbol{A}X,\boldsymbol{A}Y) - 
\boldsymbol{A}\Big(\mathcal{\tau}^{(m-1)}_{\boldsymbol{A}}(X,\boldsymbol{A}Y)+\mathcal{\tau}^{(m-1)}_{\boldsymbol{A}}(\boldsymbol{A}X,Y)\Big), \\ \nn \quad X,Y \in \mathfrak{X}(M) \ . \eea
\beq
\eeq
Here the notation $\tau_{\boldsymbol{A}}^{(0)}(X,Y):= [X,Y]$,  $\tau_{\boldsymbol{A}}^{(1)}(X,Y):=\tau_{\boldsymbol{A}}(X,Y)$
and $\tau_{\boldsymbol{A}}^{(2)}(X,Y):=\mathcal{H}_{\boldsymbol{A}}(X,Y)$ is adopted.
\end{definition}

\begin{definition}
Let $\boldsymbol{A}:\mathfrak{X}(M)\rightarrow \mathfrak{X}(M)$ be a (1,1)-tensor field. If $\mathcal{\tau}^{(m)}_{\boldsymbol{A}}(X,Y)=0$ for some $m\in \mathbb{N}\backslash\{0\}$, we shall say that $\bs{A}$ is a generalized Nijenhuis operator of level $m$.
\end{definition}
We recall a result which is crucial in the following analysis.
\begin{theorem} \cite{TT2022CMP} \label{th:MainR}
Let $\bs{A}: \mathfrak{X}(M)\to \mathfrak{X}(M)$ be an operator. Assume that
\beq\label{eq:integ}
\tau^{(m)}_{\bs{A}}(X,Y)=\bs{0}, \qquad X, Y \in \mathfrak{X}(M)
\eeq
for some $ m \geq 1$. Then, each eigen-distribution of $\bs{A}$, as well as each direct sum of its eigen-distributions, is integrable.
\end{theorem}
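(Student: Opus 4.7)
The plan is to exploit the recursive structure of $\tau^{(m)}_{\bs{A}}$ given in Definition~\ref{df:mtorsion}, reducing the problem step by step to the classical Frobenius--Nijenhuis and Haantjes integrability results which handle the base cases $m=1$ and $m=2$. The key structural observation is that the map
\[
T \longmapsto \Delta_{\bs{A}}T(X,Y) := \bs{A}^2 T(X,Y) + T(\bs{A}X,\bs{A}Y) - \bs{A}\bigl(T(X,\bs{A}Y) + T(\bs{A}X,Y)\bigr)
\]
acting on vector-valued $2$-forms factors, in a fully commutative way, as $(\bs{A}-L_1)(\bs{A}-L_2)$, where $L_i$ denotes insertion of $\bs{A}$ in the $i$-th slot. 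Iterating this factorization on $\tau^{(0)}(X,Y)=[X,Y]$ yields the explicit binomial expansion
\[
\tau^{(m)}_{\bs{A}}(X,Y) \,=\, \sum_{k,l=0}^{m}(-1)^{k+l}\binom{m}{k}\binom{m}{l}\,\bs{A}^{2m-k-l}[\bs{A}^k X,\bs{A}^l Y],
\]
turning the abstract hypothesis into a single, concrete identity among commutators of iterated images.

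First I would promote Proposition~\ref{pr:fL} to arbitrary level, proving $\tau^{(m)}_{f\bs{I}+g\bs{A}} = g^{2m}\,\tau^{(m)}_{\bs{A}}$ for $f,g\in C^\infty(M)$; this is a direct consequence of the binomial expansion above, as the contribution of $f\bs{I}$ telescopes inside each factor $(\bs{A}-L_i)$. In particular, for any locally defined eigenvalue field $\lambda$, the translated operator $\bs{B}:=\bs{A}-\lambda\bs{I}$ still satisfies $\tau^{(m)}_{\bs{B}}=\bs{0}$. If $X,Y$ belong to the ordinary eigen-distribution $\ker\bs{B}$, every non-trivial term of the expansion vanishes, leaving $\bs{B}^{2m}[X,Y]=\bs{0}$, so that $[X,Y]\in\ker\bs{B}^{2m}\subseteq D_\lambda := \ker(\bs{A}-\lambda\bs{I})^{r_\lambda}$. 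In the semisimple case $D_\lambda=\ker\bs{B}$, and this is exactly the Frobenius condition.

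For a direct sum $D_{\lambda_{i_1}}\oplus\cdots\oplus D_{\lambda_{i_s}}$ of eigen-distributions, the natural move is to replace $\bs{A}$ by the polynomial $P(\bs{A})=\prod_{j=1}^{s}(\bs{A}-\lambda_{i_j}\bs{I})^{r_{i_j}}$, whose kernel is precisely the given sum and which is invertible on the complementary invariant summand. Invoking the invariance property of $\tau^{(m)}$ under polynomial combinations (announced later in the paper), $\tau^{(m)}_{P(\bs{A})}=\bs{0}$, and the single-eigenvalue argument applied to $P(\bs{A})$ at eigenvalue $0$ concludes, since $\ker P(\bs{A})^{2m}=\ker P(\bs{A})$ by the stabilization of the Jordan filtration of each individual factor.

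The main obstacle is the fully non-semisimple single-eigenvalue case, where one must upgrade the conclusion $[X,Y]\in\ker\bs{B}^{2m}$ -- obtained only when $X,Y\in\ker\bs{B}$ -- to $[X,Y]\in D_\lambda$ for arbitrary $X,Y\in D_\lambda$. I would tackle this by induction on the Jordan depth $j$, testing the hypothesis $\tau^{(m)}_{\bs{B}}=\bs{0}$ against auxiliary pairs of the form $(\bs{B}^i X,\bs{B}^{j} Y)$ of strictly lower depth and exploiting the skew-symmetry of the bracket to peel off one Jordan layer at a time. This is the delicate step in which the full power of the higher-level torsion, as opposed to the Haantjes torsion alone, becomes essential and which distinguishes the present statement from the classical semisimple result.
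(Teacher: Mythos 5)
The paper does not actually prove this theorem --- it is imported from \cite{TT2022CMP} --- so the only in-paper trace of the intended argument is Proposition \ref{lm:TmL2autog}, the evaluation of $\tau^{(m)}_{\bs{A}}$ on pairs of generalized eigenvectors with two (possibly distinct) eigenvalues. Your core identity is correct: since the recursion of Definition \ref{df:mtorsion} is $R_{(z-\lambda)(z-\mu)}$ in the notation of Lemma \ref{lemma2}, iterating gives $\tau^{(m)}_{\bs{A}}=R_{(z-\lambda)^m(z-\mu)^m}[\cdot,\cdot]$, which is exactly your binomial expansion, and Proposition \ref{lm:TmL2autog} is its specialization to Jordan chains. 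Your treatment of a single eigen-distribution (evaluate on proper eigenvectors to get $\bs{B}^{2m}[X,Y]=\bs{0}$, then induct on total Jordan depth, using $\bs{B}$-invariance of $\mathcal{D}_\lambda$ and stabilization of $\ker\bs{B}^j$ at the Riesz index) closes correctly. For direct sums you deviate from the route suggested by Proposition \ref{lm:TmL2autog}: instead of working with two eigenvalues and projecting onto the complementary spectral summands, you pass to $P(\bs{A})=\prod_j(\bs{A}-\lambda_{i_j}\bs{I})^{\rho_{i_j}}$, invoke Theorem \ref{th:2a} to get $\tau^{(m)}_{P(\bs{A})}=\bs{0}$, and apply the proper-eigenvector argument at eigenvalue $0$, where the Riesz index is $1$. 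This is a genuinely different and rather clean reduction (it in fact also subsumes your depth induction, since a single $\mathcal{D}_i$ is $\ker(\bs{A}-\lambda_i\bs{I})^{\rho_i}$); there is no circularity, as Theorem \ref{th:2a} rests only on Lemmas \ref{lemma2} and \ref{lemma3}.

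Two points need repair. First, your claim $\tau^{(m)}_{f\bs{I}+g\bs{A}}=g^{2m}\tau^{(m)}_{\bs{A}}$ and the polynomial invariance you invoke are \emph{false} at level $m=1$ for non-constant $f,g$ (the paper stresses this after Proposition \ref{pr:fL} and after Theorem \ref{th:2a}), yet the theorem is stated for all $m\geq 1$ and the eigenvalue fields $\lambda_i(\bs{x})$ are non-constant in general. The fix is one line --- by Lemma \ref{lemma2}, $\tau^{(m)}_{\bs{A}}=\bs{0}$ forces $\tau^{(m+1)}_{\bs{A}}=\bs{0}$, so one may assume $m\geq 2$ throughout --- but it must be said. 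Second, justifying the homogeneity $\tau^{(m)}_{f\bs{I}+g\bs{A}}=g^{2m}\tau^{(m)}_{\bs{A}}$ by saying the $f\bs{I}$ contribution ``telescopes inside each factor'' is not an argument: the Lie bracket is not $C^\infty$-bilinear, so the expansion of $[(f\bs{I}+g\bs{A})^kX,(f\bs{I}+g\bs{A})^lY]$ produces derivative terms of $f$ and $g$ whose cancellation for $m\geq 2$ is precisely the content of Lemma \ref{lemma3} (with $P(z)=f+gz$, $Q_P\equiv g$); cite that rather than gesture at a telescope.
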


\subsection{Eigen-distributions and spectral properties of generalized Nijenhuis operators}
We shall recall now some of the spectral properties of non-semisimple  operators  on a manifold. We shall focus on the case of generalized Nijenhuis operators. Let us denote  by $Spec(\boldsymbol{A}):=\{ \lambda_1(\boldsymbol{x}),
 \lambda_2(\boldsymbol{x}), \ldots, \lambda_s(\boldsymbol{x})\}$, $\bs{x}\in M$,  the set of the  eigenvalues of an operator $\boldsymbol{A}: \mathfrak{X}(M)\to \mathfrak{X}(M)$, which will always be assumed to be pointwise distinct.  We denote by
 \begin{equation} \label{eq:DisL}
 \mathcal{D}_i(\boldsymbol{x}) = \ker \Big(\boldsymbol{A}(\boldsymbol{x})-\lambda_i(\boldsymbol{x})\boldsymbol{I}\Big)^{\rho_i}, \qquad i=1,\ldots,s
 \end{equation}
the \textit{i}-th eigen-distribution of  index $\rho_i\geq 1$, which includes all the  (generalized) eigenvectors  corresponding to the eigenvalue $\lambda_i$. In Eq. \eqref{eq:DisL}, $\rho_i$ stands for the Riesz index of $\lambda_i$, which is the minimum integer such that
\begin{equation} \label{eq:Riesz}
\ker \Big(\boldsymbol{A}(\boldsymbol{x})-\lambda_i(\boldsymbol{x})\boldsymbol{I}\Big)^{\rho_i}\equiv \ker \Big(\boldsymbol{A}(\boldsymbol{x})-\lambda_i(\boldsymbol{x})\boldsymbol{I}\Big)^{\rho_{i}+1} \ .
\end{equation}
 In the forthcoming considerations, we shall always suppose a \textit{regularity condition}, namely that the rank of the distributions and $\rho_i$ are (locally) independent of $\boldsymbol{x}$. 
When $\rho_i=1$,  $\mathcal{D}_i$ is a proper eigen-distribution.

It is also useful, from an applicative point of view, to consider the explicit action of a generalized Nijenhuis torsion on a pair of generalized eigenvectors of $\boldsymbol{A}$. In \cite{TT2022CMP} the following formula has been proved by induction over the integers $m\geq 2$:
\begin{proposition} \label{lm:TmL2autog}
Let $\boldsymbol{A}$ be a (1,1)-tensor and $X_\alpha\in\mathcal{D}_\mu$, $Y_\beta\in\mathcal{D}_\nu$ be two of its 
 generalized eigenvectors  corresponding to the eigenvalues $\mu$, $\nu$ respectively.
 Then, for any integer $m\geq 2$ the following formula holds:
\begin{equation}\label{eq:TmL2autog}
\mathcal{\tau}^{(m)}_ {\boldsymbol{A}} (X_\alpha, Y_\beta)=
\sum_{i,j=0}^{m}(-1)^{i+j}\binom{m}{i}\binom{m}{j} \Big(\boldsymbol{A}-\mu\mathbf{I}\Big)^{m-i}\Big(\boldsymbol{A}-\nu \mathbf{I}\Big)^{m-j}
 [X_{\alpha-i}, Y_{\beta-j}]   \ ,
\end{equation}
\end{proposition}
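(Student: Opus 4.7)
The plan is to prove the formula by induction on $m\geq 2$, with the entire argument hinging on a compact recursion that emerges when the defining formula of $\tau^{(m)}$ is restricted to a pair of generalized eigenvectors.

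First I would establish the following auxiliary identity, valid for every $m\geq 2$:
\begin{align*}
\tau^{(m)}_{\bs{A}}(X_\alpha, Y_\beta) =\ &(\bs{A}-\mu\mathbf{I})(\bs{A}-\nu\mathbf{I})\,\tau^{(m-1)}_{\bs{A}}(X_\alpha, Y_\beta) - (\bs{A}-\mu\mathbf{I})\,\tau^{(m-1)}_{\bs{A}}(X_\alpha, Y_{\beta-1}) \\
&-(\bs{A}-\nu\mathbf{I})\,\tau^{(m-1)}_{\bs{A}}(X_{\alpha-1}, Y_\beta) + \tau^{(m-1)}_{\bs{A}}(X_{\alpha-1}, Y_{\beta-1}).
\end{align*}
This is obtained from Definition \ref{df:mtorsion} by substituting $\bs{A}X_\alpha = \mu X_\alpha + X_{\alpha-1}$ and $\bs{A}Y_\beta = \nu Y_\beta + Y_{\beta-1}$, using the $C^\infty$-bilinearity of $\tau^{(m-1)}_{\bs{A}}$ (a genuine $(1,2)$-tensor for $m\geq 2$) to expand the four terms, and collecting the coefficients of each $\tau^{(m-1)}_{\bs{A}}(X_{\alpha-i}, Y_{\beta-j})$; the scalar operators then combine into $\bs{A}^2-(\mu+\nu)\bs{A}+\mu\nu\mathbf{I} = (\bs{A}-\mu\mathbf{I})(\bs{A}-\nu\mathbf{I})$ and the three remaining factors $-(\bs{A}-\mu\mathbf{I})$, $-(\bs{A}-\nu\mathbf{I})$, $\mathbf{I}$.

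For the base case $m=2$, I would apply this identity with $\tau^{(1)}_{\bs{A}}=\tau_{\bs{A}}$ and expand each $\tau_{\bs{A}}(X_{\alpha-i}, Y_{\beta-j})$ using the level-one analogue
\[
\tau_{\bs{A}}(X_\alpha, Y_\beta) = (\bs{A}-\mu\mathbf{I})(\bs{A}-\nu\mathbf{I})[X_\alpha,Y_\beta]-(\bs{A}-\mu\mathbf{I})[X_\alpha,Y_{\beta-1}]-(\bs{A}-\nu\mathbf{I})[X_{\alpha-1},Y_\beta]+[X_{\alpha-1},Y_{\beta-1}],
\]
which I would verify directly; the derivative pieces arising from brackets such as $[\mu X_\alpha,\nu Y_\beta]$ cancel thanks to the tensoriality of $\tau_{\bs{A}}$. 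Substituting and using $\binom{2}{0}=1$, $\binom{2}{1}=2$, $\binom{2}{2}=1$ reproduces \eqref{eq:TmL2autog} for $m=2$. For the inductive step $m\geq 3$, I would plug the level-$(m-1)$ formula into the four terms of the auxiliary identity and regroup all contributions according to the Lie bracket $[X_{\alpha-p}, Y_{\beta-q}]$ they carry. The coefficient multiplying $(\bs{A}-\mu\mathbf{I})^{m-p}(\bs{A}-\nu\mathbf{I})^{m-q}[X_{\alpha-p}, Y_{\beta-q}]$ is a sum of four binomial contributions, and Pascal's rule $\binom{m-1}{p-1}+\binom{m-1}{p}=\binom{m}{p}$, applied independently to the $p$ and $q$ indices, yields the target coefficient $(-1)^{p+q}\binom{m}{p}\binom{m}{q}$, closing the induction.

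The chief obstacle I anticipate is the combinatorial bookkeeping in the inductive step. A cleaner packaging I would adopt is to introduce formal shift operators $\Sigma_X,\Sigma_Y$ (decreasing $\alpha$ and $\beta$ by one respectively), which commute with $\bs{A}$; the auxiliary identity then reads $\tau^{(m)}_{\bs{A}} = (\bs{A}-\mu\mathbf{I}-\Sigma_X)(\bs{A}-\nu\mathbf{I}-\Sigma_Y)\,\tau^{(m-1)}_{\bs{A}}$, iterating to
\[
\tau^{(m)}_{\bs{A}}(X_\alpha, Y_\beta)=(\bs{A}-\mu\mathbf{I}-\Sigma_X)^{m}(\bs{A}-\nu\mathbf{I}-\Sigma_Y)^{m}[X_\alpha,Y_\beta],
\]
whereupon two applications of the binomial theorem deliver \eqref{eq:TmL2autog} in one stroke, bypassing the fourfold Pascal collapse.
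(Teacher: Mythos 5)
Your overall strategy --- induction on $m$ driven by the four-term recursion obtained by restricting Definition \ref{df:mtorsion} to a pair of generalized eigenvectors --- is the same route the paper indicates (induction over $m\geq 2$), and your auxiliary identity is indeed correct for $m\geq 2$, precisely because $\tau^{(m-1)}_{\bs{A}}$ is $C^\infty$-bilinear there; the inductive step via Pascal's rule, or equivalently the shift-operator factorization, is also sound. The genuine gap is in your base case: the ``level-one analogue'' you propose to verify directly,
\[
\tau_{\bs{A}}(X_\alpha, Y_\beta) = (\bs{A}-\mu\mathbf{I})(\bs{A}-\nu\mathbf{I})[X_\alpha,Y_\beta]-(\bs{A}-\mu\mathbf{I})[X_\alpha,Y_{\beta-1}]-(\bs{A}-\nu\mathbf{I})[X_{\alpha-1},Y_\beta]+[X_{\alpha-1},Y_{\beta-1}],
\]
is false. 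A direct expansion of $\tau_{\bs{A}}(X_\alpha,Y_\beta)$ leaves, in addition to the right-hand side above, the remainder
\[
D_{\alpha\beta}=\bigl[(\mu-\nu)(X_\alpha\nu)+(X_{\alpha-1}\nu)\bigr]Y_\beta-(X_\alpha\nu)Y_{\beta-1}
+\bigl[(\mu-\nu)(Y_\beta\mu)-(Y_{\beta-1}\mu)\bigr]X_\alpha+(Y_\beta\mu)X_{\alpha-1},
\]
which does not vanish in general: tensoriality of $\tau_{\bs{A}}$ cannot kill it, since each summand is already $C^\infty$-bilinear in $(X_\alpha,Y_\beta)$ (the derivatives $X_\alpha\nu$, $Y_\beta\mu$ of the eigenvalue fields are themselves tensorial in these arguments). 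This is exactly why the proposition is stated only for $m\geq 2$; your argument, as written, would ``prove'' the formula for $m=1$ as well, which is a warning sign.

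The proof can be repaired, but the repair is the real content of the base case. Writing $\Sigma_X,\Sigma_Y$ for your shift operators, the exact level-one statement is $\tau_{\bs{A}}(X_\alpha,Y_\beta)=(\bs{A}-\mu\mathbf{I}-\Sigma_X)(\bs{A}-\nu\mathbf{I}-\Sigma_Y)[X_\alpha,Y_\beta]+D_{\alpha\beta}$, and your recursion may only be iterated down to level one, giving $\tau^{(m)}_{\bs{A}}(X_\alpha,Y_\beta)=(\bs{A}-\mu\mathbf{I}-\Sigma_X)^{m-1}(\bs{A}-\nu\mathbf{I}-\Sigma_Y)^{m-1}\tau_{\bs{A}}(X_\alpha,Y_\beta)$, not down to the bare Lie bracket. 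You must therefore also prove that $(\bs{A}-\mu\mathbf{I}-\Sigma_X)(\bs{A}-\nu\mathbf{I}-\Sigma_Y)D_{\alpha\beta}=\mathbf{0}$. This does hold, for a structural reason: splitting $D_{\alpha\beta}$ into its part $P$ proportional to the $Y$'s and its part $Q$ proportional to the $X$'s, one has $(\bs{A}-\nu\mathbf{I})P=\Sigma_Y P$ and $(\bs{A}-\mu\mathbf{I})Q=\Sigma_X Q$, because $(\bs{A}-\nu\mathbf{I})Y_{\beta-j}=Y_{\beta-j-1}$ and $(\bs{A}-\mu\mathbf{I})X_{\alpha-i}=X_{\alpha-i-1}$ while the scalar coefficients are unaffected; hence each factor annihilates the corresponding part. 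With that lemma supplied, your binomial expansion delivers \eqref{eq:TmL2autog} for every $m\geq 2$; without it, the base case is unproven.
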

where  
\begin{equation} \label{eq:Lautg}
\boldsymbol{A} X_\alpha = \mu X_\alpha +X_{\alpha -1},\qquad\boldsymbol{A} Y_\beta = \nu Y_\beta+Y_{\beta-1} \ ,\qquad 1\leq\alpha\leq \rho_\mu,\quad 1\leq\beta\leq\rho_\nu \ ,
\end{equation}
and $X_0$ and $Y_0$  are, by definition,  null vector fields.
\subsection{Block-diagonalization}
As a nontrivial application of Theorem \ref{th:MainR}, one can also prove that, given an operator $\bs{A}$,  condition 
\eqref{eq:integ} is also sufficient to ensure the existence of a local chart where the operator $\bs{A}$ can be \textit{block-diagonalized}. We envisage relevant applications, for instance, in the theory of hydrodynamic-type systems \cite{BogJMP}, in the study of partial separability of Hamiltonian systems \cite{CR2019} and, more generally, in the context of Courant's problems for
first-order hyperbolic systems of partial differential equations \cite{CH1962}.
 
Let $\bs{A}$ be an operator satisfying condition \eqref{eq:integ}; we denote by $r_i$ the rank of the distribution $\mathcal{D}_i$ of $\bs{A}$. We also introduce the  distribution (of corank $r_i$) 
\begin{equation}
 \label{eq:E}
 \mathcal{E}_i := Im \Bigl(\boldsymbol{A}-\lambda_i\mathbf{I}\Bigr)^{\rho_i }=  \bigoplus_{{j=1,\, j\neq i}} ^s  \mathcal{D}_j, \qquad\qquad i=1,\ldots,s
\end{equation}
which is  spanned by all the generalized eigenvectors of  $\boldsymbol{A}$, except those associated with the eigenvalue $\lambda_i$ (we remind that by hypothesis  $\boldsymbol{A}$  has real eigenvalues). We shall say that $ \mathcal{E}_i$ is a \emph{characteristic distribution} of  $\boldsymbol{A}$. Let $\mathcal{E}^{\circ}_{i}$ denote the annihilator of the distribution $\mathcal{E}_{i}$. 
The   cotangent spaces of $M$ can be decomposed as
\begin{equation}
 \label{eq:TMdscomp}
T_{\boldsymbol{x}}^*M=\bigoplus_{i=1} ^s  \mathcal{E}_{i}^{\circ}(\boldsymbol{x}).
\end{equation}
 As a consequence of Theorem \ref{th:MainR},   each  characteristic distribution    $\mathcal{E}_i$  is  integrable. By $ \mathrm{E}_i$ we denote the foliation associated  with $\mathcal{E}_i$  and by
$E_i(\boldsymbol{x})$ the connected leave through $\boldsymbol{x}$,  belonging to $ \mathrm{E}_i$. Given  the set  of distributions $\{\mathcal{E}_1, \mathcal{E}_2, \ldots ,\mathcal{E}_s\}$,
we have associated an equal number of foliations $\{ \mathrm{E}_1,  \mathrm{E}_2, \ldots ,  \mathrm{E}_s\}$.   This set of foliations
 is said to be the \textit{characteristic  web} of $\boldsymbol{A}$.  The leaves $E_i(\boldsymbol{x})$ of each foliation $ \mathrm{E}_i$ are usually referred to as the  \emph{characteristic fibers} of the web.
  
  \begin{definition}
Let $\boldsymbol{A}: \mathfrak{X}(M)\to \mathfrak{X}(M)$ be an operator satisfying Eq. \eqref{eq:integ}. A collection of $r_i$ smooth functions will be said to be adapted to the foliation $\mathrm{E}_i$ of the  characteristic  \textit{web} of $\bs{A}$  if the level sets of such functions coincide with the characteristic fibers of $\mathrm{E}_i$.
\end{definition}
\begin{definition}
Let $\boldsymbol{A}: \mathfrak{X}(M)\to \mathfrak{X}(M)$ be an operator satisfying Eq. \eqref{eq:integ}. A parametrization of the characteristic web of $\boldsymbol{A}$   is an ordered set  of $n$ independent smooth functions listed as
$(\boldsymbol{f}^1, \ldots,\boldsymbol{f}^i,\ldots, \boldsymbol{f}^s) $,
such that for any $i=1,\ldots, s$, the ordered subset
$\boldsymbol{f}^i=(f^{i,1}, \ldots , f^{i,r_i})$ is adapted to the  $i$-th characteristic foliation of the web:
\begin{equation}
\label{eq:fad}
f^{i,k}_{\vert   E_i(\mathbf{x})}=c^{i,k}  \qquad \forall E_i (\mathbf{x})\in  \mathrm{E}_i \ ,\quad
k=1,\ldots,r_i\ .
\end{equation}
Here $c^{i,k}$ are real constants depending  on the indices $i$ and $k$ only.
In this case, we shall say that the  collection of these functions is adapted to the web and that each of them is a \emph{characteristic function}.
\end{definition}
In the case of a single operator with a vanishing higher-order torsion, the following result gives a simple and very general tensorial criterion for the existence of local coordinates ensuring block-diagonalizability.

\begin{theorem}\cite{TT2022CMP}\label{prop:1}
Let $\bs{A}: \mathfrak{X}(M) \to \mathfrak{X}(M)$ be an operator. 
\noi If 
 \beq
 \tau^{(m)}_{\bs{A}}(X,Y)=\bs{0} \ ,\qquad  X, Y \in \mathfrak{X}(M)
 \eeq
 for some $m\geq 1$, then  $\boldsymbol{A}$ admits local charts where it takes a block-diagonal form. 

\end{theorem}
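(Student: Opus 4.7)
The plan is to combine the integrability of the characteristic distributions, which is granted by Theorem \ref{th:MainR}, with a dimension-counting argument that glues together the Frobenius charts of the individual characteristic foliations into a single adapted coordinate system.

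First, I would invoke Theorem \ref{th:MainR}: under the hypothesis $\tau^{(m)}_{\bs{A}}=\bs{0}$, every direct sum of eigen-distributions of $\bs{A}$ is integrable. In particular, each characteristic distribution $\mathcal{E}_i=\bigoplus_{j\neq i}\mathcal{D}_j$ defined in \eqref{eq:E} is integrable and, by the regularity assumption on ranks and Riesz indices, has constant corank $r_i$. The Frobenius theorem then supplies, on a neighbourhood of any fixed point, $r_i$ functionally independent smooth first integrals $f^{i,1},\ldots,f^{i,r_i}$ whose common level sets are precisely the leaves of the foliation $\mathrm{E}_i$. Equivalently, the differentials $df^{i,k}$ generate the annihilator $\mathcal{E}_i^{\circ}$, and $\boldsymbol{f}^i=(f^{i,1},\ldots,f^{i,r_i})$ is adapted to $\mathrm{E}_i$ in the sense of \eqref{eq:fad}.

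Next, I would collect the functions into a single ordered tuple $(\boldsymbol{f}^1,\ldots,\boldsymbol{f}^s)$ of cardinality $\sum_{i=1}^s r_i=n$ and verify that this tuple forms a local chart. The key observation is the decomposition \eqref{eq:TMdscomp}: since $T_{\boldsymbol{x}}^*M=\bigoplus_{i=1}^s\mathcal{E}_i^{\circ}(\boldsymbol{x})$ and each block of differentials $\{df^{i,k}\}_{k=1}^{r_i}$ spans $\mathcal{E}_i^{\circ}(\boldsymbol{x})$, the full collection $\{df^{i,k}\}_{i,k}$ is a basis of $T_{\boldsymbol{x}}^*M$ at every point of the neighbourhood. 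By the inverse function theorem, $(\boldsymbol{f}^1,\ldots,\boldsymbol{f}^s)$ is a system of local coordinates.

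Finally, I would identify the coordinate vector fields with the eigen-distributions. Writing $\partial_{i,k}:=\partial/\partial f^{i,k}$ and using the duality $df^{j,l}(\partial_{i,k})=\delta^{j}_{i}\delta^{l}_{k}$, each $\partial_{i,k}$ is annihilated by every $df^{j,l}$ with $j\neq i$, hence lies in $\bigcap_{j\neq i}\mathcal{E}_j$. Because $\mathcal{E}_j=\bigoplus_{l\neq j}\mathcal{D}_l$ and the eigen-distributions are in direct sum, this intersection equals $\mathcal{D}_i$; a dimension count ($\dim\bigcap_{j\neq i}\mathcal{E}_j = n-\sum_{j\neq i}r_j = r_i$) confirms that $\{\partial_{i,k}\}_{k=1}^{r_i}$ is a frame for $\mathcal{D}_i$. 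Since each $\mathcal{D}_i$ is $\boldsymbol{A}$-invariant by construction, the matrix representation of $\boldsymbol{A}$ in the chart $(\boldsymbol{f}^1,\ldots,\boldsymbol{f}^s)$ splits into $s$ diagonal blocks of sizes $r_1,\ldots,r_s$, which is the desired block-diagonal form.

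The main obstacle I anticipate is the gluing step: a priori each $\boldsymbol{f}^i$ is produced by a separate application of Frobenius and one must certify their joint independence. This is exactly the role played by the cotangent decomposition \eqref{eq:TMdscomp}, which is a consequence of the pointwise direct-sum decomposition of tangent spaces into eigen-distributions together with the regularity hypothesis; once this is used, everything else reduces to linear algebra and to the invariance $\boldsymbol{A}(\mathcal{D}_i)\subseteq\mathcal{D}_i$.
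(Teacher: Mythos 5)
Your argument is correct and follows essentially the same route the paper takes (the paper itself defers the proof to \cite{TT2022CMP}, but all of its Section 3.2 machinery --- the characteristic distributions $\mathcal{E}_i$, their annihilators, the cotangent decomposition \eqref{eq:TMdscomp}, and the adapted parametrization of the characteristic web --- together with the four-step procedure of Section 6 is precisely your construction). In particular, your use of \eqref{eq:TMdscomp} to certify the joint functional independence of the Frobenius first integrals, and the identification $\bigcap_{j\neq i}\mathcal{E}_j=\mathcal{D}_i$ followed by $\boldsymbol{A}$-invariance of each $\mathcal{D}_i$, is exactly the intended mechanism.
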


One of the main achievement of this work is to generalize this result to the case of families of commuting operators. To this aim, we shall introduce a new class of operator algebras. 

\section{Generalized Haantjes algebras}

\subsection{Definitions}
The notion of Haantjes algebra has been introduced and discussed in \cite{TT2021JGP}. In this section, we define a class of new, generalized Haantjes algebras.

\begin{definition}\label{def:HM}
A generalized Haantjes algebra of level $l$ is a pair    $(M, \mathscr{H}^{(l)})$ with the following properties:
\begin{itemize}
\item
$M$ is a differentiable manifold of dimension $\mathrm{n}$;
\item
$ \mathscr{H}^{(l)}$ is a set of Haantjes  operators $\boldsymbol{K}:\mathfrak{X}(M)\to \mathfrak{X}(M)$ whose  Haantjes torsion of level $l$ vanishes: $\tau^{(l)}_{\bs{K}}= \ob$. Also, they  generate:
\begin{itemize}
\item
a free module over the ring of smooth functions on $M$:
\begin{equation}
\label{eq:Hmod}
\mathcal{\tau}^{(l)}_{\bigl( f\boldsymbol{K}_{1} +
                             g\boldsymbol{K}_2\bigr)}(X,Y)= \mathbf{0}
 \ , \qquad\forall\, X, Y \in \mathfrak{X}(M) \ , \quad \, f,g \in C^\infty(M)\  ,\quad \forall ~\boldsymbol{K}_1,\boldsymbol{K}_2 \in  \mathscr{H}^{(l)};
\end{equation}
  \item
a ring  w.r.t. the composition operation
\begin{equation}
 \label{eq:Hring}
\mathcal{\tau}^{(l)}_{\bigl(\boldsymbol{K}_1 \, \boldsymbol{K}_2\bigr)}(X,Y)=\mathbf{0} \ , \qquad
\forall\, \boldsymbol{K}_1,\boldsymbol{K}_2\in  \mathscr{H}^{(l)} , \quad\forall\, X, Y \in \mathfrak{X}(M)\ .
\end{equation}
\end{itemize}
\end{itemize}
If
\begin{equation}
\boldsymbol{K}_1\,\boldsymbol{K}_2=\boldsymbol{K}_2\,\boldsymbol{K}_1 \ , \quad\qquad\ \boldsymbol{K}_1,\boldsymbol{K}_2 \in  \mathscr{H}^{(l)}\ ,
\end{equation}
the  algebra $(M, \mathscr{H}^{(l)})$ will be said to be an Abelian generalized Haantjes algebra. Moreover, if   the identity operator $\boldsymbol{I}\in \mathscr{H}^{(l)}$, then $(M, \mathscr{H}^{(l)})$ will be said to be a generalized Haantjes algebra with identity.
\end{definition}
In other words, the set $\mathscr{H}^{(l)}$ can be regarded as an associative algebra of generalized Haantjes operators. 


The case $l=2$, namely that of Haantjes algebras \cite{TT2021JGP}, possesses several important properties. Among them, we recall that for a given Abelian Haantjes algebra $\mathscr{H}^{(2)}\equiv\mathscr{H}$ there exists  an associated set of  coordinates, called \textit{Haantjes coordinates},   by means of which  all $\boldsymbol{K}\in \mathscr{H}$ can  be  written simultaneously in a  block-diagonal form.  
In particular, if $\mathscr{H}$ is also semisimple, on each set of Haantjes coordinates all $\boldsymbol{K}\in \mathscr{H}$ can  be written simultaneously  in a purely diagonal form \cite{TT2021JGP}. We mention that Haantjes algebras play a relevant role in the theory of classical  separable and multiseparable Hamiltonian systems \cite{RTT2022CNS}.

\vspace{2mm}

The following result simplifies the study of the integrability of the eigen-distributions of a family of operators forming an Abelian, generalized Haantjes algebra.
 
 \begin{lemma}\label{th:HaanA}
Let $(M, \mathscr{H}^{(l)})$ be an Abelian generalized Haantjes algebra of level $l$. We shall assume that the rank of the  eigen-distributions of the operators belonging to $\mathscr{H}^{(l)}$ is independent of $\boldsymbol{x}\in M$.
Then, each nontrivial intersection of these   eigen-distributions is integrable.
\end{lemma}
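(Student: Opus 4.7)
The strategy is to realize every nontrivial intersection of eigen-distributions of the family as the generalized eigen-distribution of a \emph{single} operator in $\mathscr{H}^{(l)}$, and then to invoke Theorem \ref{th:MainR} for that operator. Since $\mathscr{H}^{(l)}$ is Abelian, commuting operators $\bs{K}_1,\ldots,\bs{K}_p$ leave each other's generalized eigenspaces invariant, so a nontrivial intersection
\[
\mathcal{D}_{i_1,\ldots,i_p} \;:=\; \mathcal{D}_{i_1}^{(1)} \cap \cdots \cap \mathcal{D}_{i_p}^{(p)}
\]
is a common generalized eigenspace on which $\bs{K}_\alpha$ acts as $\lambda_{i_\alpha}^{(\alpha)}\bs{I} + \bs{N}_\alpha$ with pairwise commuting nilpotents $\bs{N}_\alpha$.

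The central construction would be the linear combination
\[
\bs{K} \;=\; \sum_{\alpha=1}^{p} c_\alpha \, \bs{K}_\alpha
\]
with suitably chosen constants $c_\alpha \in \mathbb{R}$. On each nontrivial intersection $\mathcal{D}_{i_1,\ldots,i_p}$, $\bs{K}$ acts as $\Lambda_{(i_1,\ldots,i_p)}\bs{I} + \sum_\alpha c_\alpha \bs{N}_\alpha$, where $\Lambda_{(i_1,\ldots,i_p)}(\bs{x}) := \sum_\alpha c_\alpha \lambda_{i_\alpha}^{(\alpha)}(\bs{x})$ and the remaining nilpotent tail is clearly nilpotent. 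For a generic choice of $(c_1,\ldots,c_p)$ the scalars $\Lambda_{(i_1,\ldots,i_p)}$ are pairwise distinct at a prescribed point $\bs{x}_0$, and by continuity of the eigenvalue fields (guaranteed by the local constancy of the ranks of the eigen-distributions) they remain pairwise distinct on a whole neighborhood of $\bs{x}_0$. Consequently, the generalized eigen-distribution of $\bs{K}$ attached to $\Lambda_{(i_1,\ldots,i_p)}$ coincides with $\mathcal{D}_{i_1,\ldots,i_p}$ on that neighborhood.

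At this point the module axiom \eqref{eq:Hmod} delivers $\tau^{(l)}_{\bs{K}} = \bs{0}$, and Theorem \ref{th:MainR} applies to $\bs{K}$ to yield integrability of each of its generalized eigen-distributions, hence in particular of our chosen intersection. Since integrability is a local property, repeating the argument around every point of $M$ concludes the proof.

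The main obstacle I foresee is the \emph{separation step}: one must guarantee that generic coefficients $c_\alpha$ actually separate all competing eigenvalue sums on a full neighborhood, not merely at a single point. The natural remedy is the observation that the set of "bad" values of $(c_1,\ldots,c_p)$ at a fixed $\bs{x}$ is a finite union of affine hyperplanes depending continuously on $\bs{x}$, so any choice in the open dense complement at $\bs{x}_0$ remains admissible on a neighborhood; everything else is a routine combination of the algebra axioms with Theorem \ref{th:MainR}.
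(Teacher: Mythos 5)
Your argument is correct, but it is a genuinely different and considerably heavier route than the one the paper takes. The paper's own proof is two lines: by Theorem \ref{th:MainR} each eigen-distribution $\mathcal{D}_{i_\alpha}^{(\alpha)}$ of each generator is involutive, and an intersection of involutive distributions is automatically involutive (if $X,Y$ are sections of every $\mathcal{D}_{i_\alpha}^{(\alpha)}$, so is $[X,Y]$), hence integrable by Frobenius once the constant-rank hypothesis is imposed. That argument uses neither the Abelian hypothesis nor the module axiom \eqref{eq:Hmod}. Your proof instead realizes each nontrivial intersection as a generalized eigen-distribution of a single generic constant-coefficient combination $\bs{K}=\sum_\alpha c_\alpha\bs{K}_\alpha$, which has $\tau^{(l)}_{\bs{K}}=\bs{0}$ by (an iterated application of) \eqref{eq:Hmod}, and then applies Theorem \ref{th:MainR} to $\bs{K}$ alone. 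The steps you worry about are indeed fine: two distinct multi-indices differ in some slot $\alpha$ where the eigenvalues of $\bs{K}_\alpha$ are pointwise distinct, so the bad coefficients form a finite union of proper affine hyperplanes, and a good choice at $\bs{x}_0$ persists on a neighborhood; the pointwise joint generalized-eigenspace decomposition for commuting split operators, together with nilpotency of the sum of the commuting nilpotent tails, identifies the generalized eigenspaces of $\bs{K}$ with the intersections. What your extra machinery buys is a strictly stronger conclusion: Theorem \ref{th:MainR} applied to $\bs{K}$ gives integrability not only of each intersection but of every direct sum of intersections, i.e.\ their \emph{mutual} integrability --- which is precisely hypothesis (ii) of Theorem \ref{th:2} and is not delivered by the paper's elementary involutivity argument. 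The price is that your proof genuinely needs commutativity and the module closure, whereas the paper's does not.
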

\begin{proof}
Let $\{\boldsymbol{K}_1,\ldots,\boldsymbol{K}_w\}$ be a basis of $(M, \mathscr{H}^{(l)})$, and
$ (\mathcal{D}_{i_1}^{(1)},  \ldots    ,\mathcal{D}_{i_w}^{(w)})$, $i_1=1,\ldots, s_1$, $i_w=1,\ldots,s_w$, be the set of their (proper) eigen-distributions. Let
\begin{equation}
\label{eq:1ints}
 \mathcal{V}_a=\mathcal{D}_{i_1}^{(1)}(\boldsymbol{x}) \bigcap  \ldots      \bigcap \mathcal{D}_{i_w}^{(w)}(\boldsymbol{x})\qquad a=1,\ldots, v, \qquad v\leq n
\end{equation}
denote a nontrivial intersection of eigen-distributions of the operators $\{\boldsymbol{K}_1,\ldots,\boldsymbol{K}_w\}$.  Consequently,  this distribution,  being the  intersection  of distributions which are involutive  due to Theorem \ref{th:MainR}  \cite{TT2022CMP}, is also involutive.

\end{proof}


\subsection{Cyclic generalized Haantjes algebras}
A special class of commutative generalized Haantjes algebras are those generated by a single operator and its independent powers.

\begin{definition}
A generalized Haantjes algebra $(M, \mathscr{H}^{(l)})$ will be said to be cyclic if there exists an operator $\bs{L}:\mathfrak{X}(M)\rightarrow \mathfrak{X}(M)$, with $\tau_{\bs{L}}^{(l)}(X,Y)=0$,  such that $\mathscr{H}^{(l)}\subseteq\mathcal{L}(\bs{L})$, where $\mathcal{L}(\bs{L})= \langle \bs{I}, \bs{L}, \ldots, \bs{L}^{k},\ldots \rangle $ is the algebra generated by all the independent powers of $\bs{L}$.

\end{definition}

Given a cyclic algebra $(M, \mathscr{H}^{(l)})$ generated by an operator $\bs{L}$, its rank is less than or equal to the degree of the minimal polynomial of $\bs{L}$.

\vspace{2mm}

\begin{remark}
Cyclic generalized Haantjes algebras as algebraic structures are not rare. Indeed, as we will prove, given an operator $\bs{L}$ with $\tau_{\bs{L}}^{(l)}(X,Y)=0$,  any polynomial in $\bs{L}$ having coefficients in $C^{\infty}(M)$ is still an operator having the same  level-$l$ vanishing torsion. Thus, the independent powers of any operator with a given vanishing generalized torsion provide naturally a basis of a cyclic generalized algebra of the same level.
\end{remark}


In the following, we shall prove this fundamental fact. To this aim, we shall use and adapt to our general case several algebraic techniques and results introduced and proved by Bogoyavlenskij in \cite{BogIzv2004} for the case of Nijenhuis and Haantjes operators. In particular, we shall use a useful representation of (1,2)-tensors in the ring $S_3$ of polynomials of three independent variables $z,\lambda,\mu$  with coefficients depending on $\bs{x}\in M$. A generic polynomial of this ring has the form
\[
S(z,\lambda,\mu):= \sum_{i,j,k}^{N} s_{ijk}(\bs{x})z^{i}\lambda^{j}\mu^{k}, \qquad \bs{x}\in M
\]
with $N\in \mathbb{N}\backslash\{0\}$, $s_{ijk}(\bs{x})\in C^{\infty}(M)$. Given an operator $\bs{A}$ and a (1,2)-tensor $T(X,Y)$, we introduce the representation defined by \cite{BogIzv2004}
\beq \label{eq:rep}
R_S(T)(X,Y) = \sum_{i,j,k}^{N} s_{ijk}(\bs{x})\bs{A}^{i}T(\bs{A}^{j}X, \bs{A}^k Y) \ .
\eeq
Thus, the action of $\lambda$ and $\mu$ is associated with the first and second arguments of $T(X,Y)$, whereas the action of $z$ is associated with the valued of $T(X,Y)$. Representation \eqref{eq:rep} satisfies the basic properties
\beq \label{eq:proprep1}
R_{S_1+S_2}=R_{S_1}+R_{S_2},
\eeq
\beq \label{eq:proprep2}
 \qquad R_{S_1 \cdot S_2}=R_{S_2 \cdot S_1}= R_{S_1}\cdot R_{S_2} \ .
\eeq
This framework allows us to represent the ``tower" of generalized Nijenhuis torsions in a direct way.
\begin{lemma} \label{lemma2}
Let $\bs{A}: \mathfrak{X}(M)\rightarrow \mathfrak{X}(M)$ be an operator. We have:
\beq
\mathcal{\tau}^{(m+1)}_{\boldsymbol{A}}(X,Y)=R_\sigma \tau^{(m)}_{\boldsymbol{A}}(X,Y)
\eeq
where $\sigma$ is the polynomial $\sigma(z,\lambda,\mu)=(z-\lambda)(z-\mu)$.
\end{lemma}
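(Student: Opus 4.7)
The plan is to prove the identity by direct unpacking of the two sides: on one hand, write out $\tau^{(m+1)}_{\bs{A}}(X,Y)$ using Definition \ref{df:mtorsion} applied at level $m+1$; on the other hand, expand the polynomial $\sigma(z,\lambda,\mu) = (z-\lambda)(z-\mu)$ into monomials in $z,\lambda,\mu$ and apply the representation \eqref{eq:rep} with $T = \tau^{(m)}_{\bs{A}}$. Both sides should coincide termwise.

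More concretely, I would first expand
\[
\sigma(z,\lambda,\mu) = z^2 - z\lambda - z\mu + \lambda\mu,
\]
so that the nonzero coefficients $s_{ijk}$ in the ring $S_3$ are $s_{200}=1,\ s_{110}=-1,\ s_{101}=-1,\ s_{011}=1$. Recalling the convention that $z$ encodes outer multiplication by $\bs{A}$, $\lambda$ encodes substitution of $\bs{A}$ into the first argument of the (1,2)-tensor, and $\mu$ into the second, the representation formula gives
\[
R_\sigma T(X,Y) = \bs{A}^{2}\, T(X,Y) - \bs{A}\, T(\bs{A}X,Y) - \bs{A}\, T(X,\bs{A}Y) + T(\bs{A}X,\bs{A}Y).
\]
Specialising $T = \tau^{(m)}_{\bs{A}}$ produces exactly the four terms appearing in the recursive definition of $\tau^{(m+1)}_{\bs{A}}(X,Y)$, whence the claim.

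There is really no substantive obstacle: the lemma is essentially a reformulation of Definition \ref{df:mtorsion} in the language of the polynomial representation $R_S$, and its content is that the chosen encoding of arguments and outputs by the variables $\lambda,\mu,z$ is consistent with the recursion. The only care needed is in matching conventions: one must check (a) that the grouping of the mixed terms $-\bs{A}(\tau^{(m)}_{\bs{A}}(X,\bs{A}Y)+\tau^{(m)}_{\bs{A}}(\bs{A}X,Y))$ in Definition \ref{df:mtorsion} corresponds to the sum of the $s_{110}$ and $s_{101}$ contributions, and (b) that the assignment of $\lambda$ versus $\mu$ to the first versus the second argument is consistent throughout, so that $\sigma$ is indeed symmetric in $\lambda$ and $\mu$ (as it must be, since $\tau^{(m)}_{\bs{A}}$ is skew and hence the ``first'' and ``second'' slots play symmetric roles).

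Having established the identity at this level, one also gains, essentially for free, the more general statement that iterating yields $\tau^{(m)}_{\bs{A}} = R_{\sigma^{m-1}}\,\tau^{(1)}_{\bs{A}}$ via the multiplicativity property \eqref{eq:proprep2}, which will presumably be used in the sequel to transport polynomial identities in $\sigma$ into tensorial identities on the tower of generalized torsions.
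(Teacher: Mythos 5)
Your proof is correct and follows the same route as the paper: expand $\sigma=z^2-z\lambda-z\mu+\lambda\mu$, read off the four monomial contributions via the representation \eqref{eq:rep}, and observe that they reproduce verbatim the recursive definition \eqref{GNTn} of $\tau^{(m+1)}_{\boldsymbol{A}}$. The paper's own proof is exactly this one-line verification, so no further comment is needed.
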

\begin{proof}
It is an immediate consequence of the formula \eqref{GNTn}, which indeed can be written as
\bea
\nn R_{(z^2-z\lambda-z\mu+\lambda\mu)} \mathcal{\tau}^{(m)}_{\boldsymbol{A}}(X,Y)&=&\boldsymbol{A}^2\mathcal{\tau}^{(m)}_{\boldsymbol{A}}(X,Y)+
\mathcal{\tau}^{(m)}_{\boldsymbol{A}}(\boldsymbol{A}X,\boldsymbol{A}Y) \\ \nn  &-& 
\boldsymbol{A}\Big(\mathcal{\tau}^{(m)}_{\boldsymbol{A}}(X,\boldsymbol{A}Y)+\mathcal{\tau}^{(m)}_{\boldsymbol{A}}(\boldsymbol{A}X,Y)\Big) \\ &=&\mathcal{\tau}^{(m+1)}_{\boldsymbol{A}}(X,Y) \ .
\eea
\end{proof}
When $m=2$, we recover the formula 
\[
\mathcal{H}_{\bs{A}}(X,Y)= R_{\sigma}\tau_{\bs{A}}(X,Y)
\]
first stated in \cite{BogIzv2004}. We also remind that, given a polynomial $P(z)=\sum_{k=0}^{N}c_k(\bs{x})z^{k}$, the \textit{B\'ezout identity} holds:
\beq \label{eq:BI}
P(z)-P(\lambda)=(z-\lambda)Q_{P}(z,\lambda)
\eeq
where 
\[
Q_{P}(z,\lambda)=\sum_{k=1}^{N}c_k(\bs{x})\sum_{p+q=k-1}z^{p}\lambda^{q} \ .
\]

We prove now a useful technical result.
\begin{lemma} \label{lemma3}
Let $\bs{A}:\mathfrak{X}(M)\rightarrow \mathfrak{X}(M)$ be an operator. Let  $\bs{P}:=P(\bs{A})= \sum_{k=0}^{N} c_{k}(\bs{x})\bs{A}^{k}$ be a polynomial in $\bs{A}$ with variable coefficients. We have
\beq \label{lemma2eq0}
\mathcal{\tau}_{\bs{P}}^{(m)}(X,Y)=R_{(Q_{P}(z,\lambda))^m(Q_{P}(z,\mu))^m}\mathcal{\tau}_{\bs{A}}^{(m)}(X,Y), \qquad m\geq 2 \ .
\eeq
\end{lemma}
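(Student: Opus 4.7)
The plan is to convert the problem into a clean manipulation in the three-variable polynomial ring $S_3$ via the representation $R_S$, and then apply the B\'ezout identity \eqref{eq:BI} to factor the ``translation'' from $\bs{A}$ to $\bs{P}$.

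First I would establish by induction on $m$ that
\[
\tau^{(m)}_{\bs{A}}(X,Y) \;=\; \bigl(R^{\bs{A}}_\sigma\bigr)^m [X,Y],
\]
where $\sigma(z,\lambda,\mu)=(z-\lambda)(z-\mu)$ and $R^{\bs{A}}$ denotes the representation \eqref{eq:rep} built with $\bs{A}$: the base case $m=0$ is the convention $\tau^{(0)}_{\bs{A}}(X,Y)=[X,Y]$, and the inductive step is exactly Lemma \ref{lemma2}. The same identity holds verbatim with $\bs{A}$ replaced by $\bs{P}$, giving $\tau^{(m)}_{\bs{P}}(X,Y) = (R^{\bs{P}}_\sigma)^m [X,Y]$, where $R^{\bs{P}}$ denotes the analogous representation built with $\bs{P}=P(\bs{A})$ in place of $\bs{A}$.

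Second, I would prove the key compatibility
\[
R^{\bs{P}}_S \;=\; R^{\bs{A}}_{S(P(z),\,P(\lambda),\,P(\mu))}, \qquad S\in S_3.
\]
On a monomial $z^a\lambda^b\mu^c$ this reduces to the identity
$\bs{P}^a\, T(\bs{P}^b X,\bs{P}^c Y) = P(\bs{A})^a\, T(P(\bs{A})^b X,\,P(\bs{A})^c Y)$; expanding the three occurrences of $P$ and using the $C^\infty(M)$-multilinearity of the $(1,2)$-tensor $T$ in each argument, together with the fact that the scalar coefficients of $P$ commute with $\bs{A}$ pointwise, yields exactly the definition of $R^{\bs{A}}$ applied to the substituted polynomial. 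The general case follows from \eqref{eq:proprep1}--\eqref{eq:proprep2} by linearity.

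Third, I would specialize this compatibility to $\sigma$. The B\'ezout identity \eqref{eq:BI} gives
\[
\sigma\bigl(P(z),P(\lambda),P(\mu)\bigr) = \bigl(P(z)-P(\lambda)\bigr)\bigl(P(z)-P(\mu)\bigr) = \sigma(z,\lambda,\mu)\, Q_P(z,\lambda)\, Q_P(z,\mu),
\]
so that $R^{\bs{P}}_\sigma = R^{\bs{A}}_{\sigma\, Q_P(z,\lambda)\, Q_P(z,\mu)}$. Iterating $m$ times and using \eqref{eq:proprep2} to split the product,
\[
\bigl(R^{\bs{P}}_\sigma\bigr)^m = R^{\bs{A}}_{\sigma^m (Q_P(z,\lambda))^m (Q_P(z,\mu))^m} = R^{\bs{A}}_{(Q_P(z,\lambda))^m (Q_P(z,\mu))^m}\cdot R^{\bs{A}}_{\sigma^m}.
\]
Applying both sides to $[X,Y]$ and recognizing $R^{\bs{A}}_{\sigma^m}[X,Y]=\tau^{(m)}_{\bs{A}}(X,Y)$ by step one yields precisely \eqref{lemma2eq0}. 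The main obstacle is the compatibility step: one has to verify that substituting $P(z),P(\lambda),P(\mu)$ into the three slots of $S$ corresponds cleanly to replacing $\bs{A}$ by $\bs{P}$ inside the representation, carefully accounting for the $\bs{x}$-dependent coefficients of $P$ and the tensoriality of $T$. Once this is in place, the rest is formal algebra in $S_3$ combined with the B\'ezout factorization.
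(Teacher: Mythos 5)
Your overall architecture --- the representation $R_S$, the B\'ezout factorization $(P(z)-P(\lambda))(P(z)-P(\mu))=\sigma\, Q_P(z,\lambda)Q_P(z,\mu)$, and the iteration of Lemma \ref{lemma2} --- is exactly the paper's, and from level $2$ upward your argument coincides with the paper's inductive step. The gap is at the bottom of your iteration. Your compatibility identity $R^{\bs{P}}_S=R^{\bs{A}}_{S(P(z),P(\lambda),P(\mu))}$ is proved, as you say yourself, by using the $C^\infty(M)$-multilinearity of the argument $T$; it is therefore valid only when applied to a genuine $(1,2)$-tensor. But when you apply the operator identity $(R^{\bs{P}}_\sigma)^m=R^{\bs{A}}_{\sigma^m(Q_P(z,\lambda))^m(Q_P(z,\mu))^m}$ to $[X,Y]$, the innermost factor applies the compatibility to the Lie bracket, which is not $C^\infty(M)$-bilinear: pulling the variable coefficients $c_k(\bs{x})$ out of $[\bs{P}X,\bs{P}Y]$, $[X,\bs{P}Y]$, etc.\ produces derivative terms $X(c_k)$, $Y(c_k)$ that your formula discards. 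Concretely, your identity at the first level would read $\tau^{(1)}_{\bs{P}}=R_{Q_P(z,\lambda)Q_P(z,\mu)}\tau^{(1)}_{\bs{A}}$, which is \emph{false} for non-constant $c_k$ --- this is precisely why the lemma is stated only for $m\geq 2$ and why, as remarked after Theorem \ref{th:2a}, the analogous statement fails for the Nijenhuis torsion. The nontrivial content of the lemma is that the subsequent application of $R_\sigma$ annihilates these spurious derivative terms, and your proof never establishes that.

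The repair is to anchor the induction at $m=2$ rather than at the bracket: take as base case Bogoyavlenskij's identity $\mathcal{H}_{P(\bs{A})}(X,Y)=R_{(Q_P(z,\lambda))^2(Q_P(z,\mu))^2}\mathcal{H}_{\bs{A}}(X,Y)$ (proved in \cite{BogIzv2004}, where the cancellation of the derivative terms is carried out explicitly), and then run your steps two and three only on the tensors $\tau^{(m-1)}_{\bs{P}}$ with $m-1\geq 2$, where the compatibility is legitimate. That is exactly the structure of the paper's proof.
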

\begin{proof}
We shall proceed by induction over $m\geq 2$. The case $m=2$, corresponding to the Haantjes torsion,  has been proved in \cite{BogIzv2004}. Thus, we assume that the property is true for the case of a torsion of level $m-1$ and we prove that it holds true for the case of a torsion of level $m$. According to Lemma \ref{lemma2}, we have
\beq\label{lemma2eq2}
 \mathcal{\tau}^{(m)}_{\boldsymbol{A}}(X,Y)=R_{\sigma} \mathcal{\tau}^{(m-1)}_{\boldsymbol{A}}(X,Y)
\eeq
where  $\sigma=(z-\lambda)(z-\mu)$. Also, in terms of the operator $\bs{P}$ expression \eqref{lemma2eq2} can be  written as
\beq
 \mathcal{\tau}^{(m)}_{\bs{P}}(X,Y)=R_{(P(z)-P(\lambda))(P(z)-P(\mu))}\mathcal{\tau}^{(m-1)}_{\bs{P}}(X,Y) \ .
\eeq
Thus, applying twice the Bezout identity \eqref{eq:BI}, we get
\[
(P(z)-P(\lambda))(P(z)-P(\mu))=(z-\lambda)(z-\mu)Q_P(z,\lambda)Q_P(z,\mu) \ .
\]
Exploiting property \eqref{eq:proprep2}, we obtain
\beq \label{eq:4.5}
\mathcal{\tau}^{(m)}_{\bs{P}}(X,Y)=R_{Q_P(z,\lambda)Q_P(z,\mu)}R_{\sigma}\mathcal{\tau}^{(m-1)}_{\bs{P}}(X,Y) \ . 
\eeq
By the induction hypothesis, we have
\[
\mathcal{\tau}^{(m-1)}_{\bs{P}}(X,Y)= R_{(Q_{P}(z,\lambda))^{m-1}(Q_{P}(z,\mu))^{m-1}}\mathcal{\tau}_{\bs{A}}^{(m-1)}(X,Y) \ .
\]
Consequently, due to eq. \eqref{eq:4.5} and Lemma \ref{lemma2}, we deduce that
\beq
\mathcal{\tau}^{(m)}_{\bs{P}}(X,Y)=R_{(Q_P(z,\lambda))^{m}(Q_P(z,\mu))^{m}}R_\sigma \tau^{(m-1)}_{\bs{A}}(X,Y)= R_{(Q_P(z,\lambda))^{m}(Q_P(z,\mu))^{m}}\tau^{(m)}_{\bs{A}}(X,Y) \ ,
\eeq
which completes the proof.
\end{proof}
We can now state the main result of this section.

\begin{theorem} \label{th:2a}
Let $\bs{A}:\mathfrak{X}(M)\rightarrow \mathfrak{X}(M)$ be an operator with $\tau_{\bs{A}}^{(m)}(X,Y)=0$, $m\geq 2$. Then 
\beq \label{eq:conj}
\tau^{(m)}_{\big(\sum_{k=0}^{N} c_{k}(\bs{x})\bs{A}^{k}\big)}(X,Y) =0, \qquad X, Y\in \mathfrak{X}(M) \ , 
\eeq
where $c_{k}(\bs{x})\in C^{\infty}(M)$.
\end{theorem}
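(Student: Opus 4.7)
The plan is to derive the theorem as an almost immediate corollary of Lemma \ref{lemma3}, with essentially no additional work needed. First I would simply substitute the hypothesis $\tau_{\bs{A}}^{(m)}(X,Y)=\bs{0}$ into the identity
\[
\mathcal{\tau}_{\bs{P}}^{(m)}(X,Y)=R_{(Q_{P}(z,\lambda))^m(Q_{P}(z,\mu))^m}\mathcal{\tau}_{\bs{A}}^{(m)}(X,Y)
\]
established in Lemma \ref{lemma3}, where $\bs{P}:=\sum_{k=0}^{N} c_{k}(\bs{x})\bs{A}^{k}$. Since the representation map $R_S$ is $\mathbb{R}$-linear in its tensor argument (by definition \eqref{eq:rep}, $R_S(T)(X,Y)$ is just a finite sum of terms of the form $\bs{A}^i T(\bs{A}^j X, \bs{A}^k Y)$ with smooth scalar prefactors), feeding the zero tensor into $R_{(Q_{P}(z,\lambda))^m(Q_{P}(z,\mu))^m}$ returns zero. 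Hence $\tau_{\bs{P}}^{(m)}(X,Y)=\bs{0}$, which is precisely the claim.

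In this sense the real content of the theorem has already been packaged into the algebraic machinery developed on the preceding pages: the polynomial representation $R_S$, the two compatibility properties \eqref{eq:proprep1}--\eqref{eq:proprep2}, the B\'ezout identity \eqref{eq:BI}, and the inductive propagation via Lemma \ref{lemma2}. The principal technical difficulty, namely that the coefficients $c_k(\bs{x})$ are smooth functions rather than constants (so that the commutators hidden inside the iterated torsion $\tau^{(m)}_{\bs{A}}$ interact nontrivially with multiplication by $c_k$), has been entirely absorbed into the inductive proof of Lemma \ref{lemma3}. Consequently I anticipate no obstacle in the proof of the theorem itself, and the entire argument fits on one line.

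The only point I would emphasise in the exposition is the structural interpretation of the statement: it asserts that the $C^{\infty}(M)$-algebra generated by $\bs{I}, \bs{A}, \bs{A}^2,\ldots$ is closed under the vanishing-of-$\tau^{(m)}$ condition. This is precisely what legitimises the notion of cyclic generalized Haantjes algebra introduced just above, and makes good on the remark asserting that such algebras are ``not rare'': every operator with a vanishing generalized torsion of some level $m$ automatically generates one.
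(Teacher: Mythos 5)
Your proposal is correct and matches the paper's own proof exactly: the paper also derives the theorem as an immediate consequence of Lemma \ref{lemma3} by substituting $\tau_{\bs{A}}^{(m)}(X,Y)=\bs{0}$ into the identity \eqref{lemma2eq0}. The additional observation about linearity of $R_S$ in its tensor argument, and the structural remark on cyclic algebras, are consistent with (and slightly more explicit than) the paper's one-line argument.
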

\begin{proof}
Let $\bs{P}:=P(\bs{A})= \sum_{k=0}^{N} c_{k}(\bs{x})\bs{A}^{k}$. It is sufficient to apply Lemma \ref{lemma3}, taking into account that  $\tau_{\bs{A}}^{(m)}(X,Y)=0$. 
\end{proof}

Interestingly enough, Theorem \ref{th:2a} does not hold for the case of the Nijenhuis torsion (i.e., $m=1$). As is well-known, the relation \eqref{eq:conj} is valid for the Nijenhuis torsion only if the coefficients $c_k(\bs{x})$ are all constant. 
\section{Generalized Nijenhuis torsions and simultaneous block-diagonalization}

\begin{theorem}\label{th:1}
Let  $\mathcal{S}=\{\boldsymbol{K}_1,\ldots,\boldsymbol{K}_w\}$, $\boldsymbol{K}_\alpha: \mathfrak{X}(M)\rightarrow \mathfrak{X}(M)$, $\alpha=1,\ldots,w$ be  a set  of commuting   operator fields; we assume that  one of them, say $\boldsymbol{K}_1$, satisfies the condition 
\beq \label{eq:44-1}
\tau^{(l)}_{\bs{K}_1}(X,Y)=\bs{0}, \qquad X, Y \in \mathfrak{X}(M) \ ,
\eeq
for some $l\geq 1$.  Then, there exist local charts in which all of the operators $\boldsymbol{K}_\alpha$ can be written simultaneously in a block-diagonal form.

\end{theorem}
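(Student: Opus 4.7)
The plan is to first produce a block-diagonalizing chart for $\boldsymbol{K}_1$ using the single-operator result (Theorem \ref{prop:1}), and then to exploit commutativity to show that this very same chart works simultaneously for all $\boldsymbol{K}_\alpha$. The conceptual engine is the classical linear-algebra fact that commuting operators preserve one another's generalized eigenspaces, upgraded here to the level of smooth distributions on $M$.

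First, since $\tau^{(l)}_{\boldsymbol{K}_1}=\boldsymbol{0}$, Theorem \ref{prop:1} provides a local chart adapted to the characteristic web $\{\mathrm{E}_1,\ldots,\mathrm{E}_s\}$ of $\boldsymbol{K}_1$. Denote the coordinates by $(x^{i,k})$ with $i=1,\ldots,s$ and $k=1,\ldots,r_i$, chosen so that for each fixed $i$ the functions $(x^{i,1},\ldots,x^{i,r_i})$ parametrize the foliation $\mathrm{E}_i$ in the sense of \eqref{eq:fad}. A short verification then shows that the natural frame satisfies
\begin{equation}
\frac{\partial}{\partial x^{i,k}}\in\mathcal{D}_i(\boldsymbol{K}_1), \qquad k=1,\ldots,r_i.
\end{equation}
Indeed, the one-forms $dx^{j,k'}$ with $j\neq i$ annihilate $\partial/\partial x^{i,k}$, and they also annihilate the distribution $\mathcal{E}_j\supset\mathcal{D}_{i}$; this forces $\partial/\partial x^{i,k}$ to lie in the common kernel of all $dx^{j,k'}$ with $j\neq i$, which via the decomposition \eqref{eq:TMdscomp} is precisely $\mathcal{D}_i(\boldsymbol{K}_1)$. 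This is the structural reason why $\boldsymbol{K}_1$ is block-diagonal in these coordinates.

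Next, I would prove that for every $\alpha=1,\ldots,w$ the operator $\boldsymbol{K}_\alpha$ preserves each generalized eigen-distribution of $\boldsymbol{K}_1$. Let $X\in\mathcal{D}_i(\boldsymbol{K}_1)$, i.e.\ $(\boldsymbol{K}_1-\lambda_i\boldsymbol{I})^{\rho_i}X=\boldsymbol{0}$ at each point. Because $[\boldsymbol{K}_1,\boldsymbol{K}_\alpha]=\boldsymbol{0}$, the operators $\boldsymbol{K}_\alpha$ and $(\boldsymbol{K}_1-\lambda_i\boldsymbol{I})^{\rho_i}$ also commute pointwise, so
\begin{equation}
(\boldsymbol{K}_1-\lambda_i\boldsymbol{I})^{\rho_i}\,\boldsymbol{K}_\alpha X=\boldsymbol{K}_\alpha\,(\boldsymbol{K}_1-\lambda_i\boldsymbol{I})^{\rho_i}X=\boldsymbol{0},
\end{equation}
which means $\boldsymbol{K}_\alpha X\in\mathcal{D}_i(\boldsymbol{K}_1)$. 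Hence each eigen-distribution $\mathcal{D}_i(\boldsymbol{K}_1)$ is $\boldsymbol{K}_\alpha$-invariant.

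Combining the two steps concludes the argument: in the chart $(x^{i,k})$ constructed above, each frame vector $\partial/\partial x^{i,k}$ lies in $\mathcal{D}_i(\boldsymbol{K}_1)$, hence $\boldsymbol{K}_\alpha\,\partial/\partial x^{i,k}\in\mathcal{D}_i(\boldsymbol{K}_1)=\mathrm{span}\{\partial/\partial x^{i,1},\ldots,\partial/\partial x^{i,r_i}\}$; this is exactly the statement that the matrix of $\boldsymbol{K}_\alpha$ in the natural frame is block-diagonal, with block sizes $r_1,\ldots,r_s$ matching those of $\boldsymbol{K}_1$.

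The main technical obstacle I expect is the identification $\partial/\partial x^{i,k}\in\mathcal{D}_i(\boldsymbol{K}_1)$; it hinges on the careful duality between the characteristic distributions $\mathcal{E}_j$, their annihilators, and the coordinate functions adapted to the web, and should probably be stated as a short preliminary lemma extracted from the construction underlying Theorem \ref{prop:1}. Everything else (commutativity preserving generalized eigenspaces, and the algebraic conclusion) is then essentially formal. A further refinement that I would mention (but not prove here) is that if \emph{all} $\boldsymbol{K}_\alpha$ satisfy $\tau^{(l)}_{\boldsymbol{K}_\alpha}=\boldsymbol{0}$, then Lemma \ref{th:HaanA} guarantees the integrability of the nontrivial intersections of the eigen-distributions, and one can iterate the above construction on each block to obtain a finer simultaneous block-diagonalization, which is the content of the refined version announced in the introduction.
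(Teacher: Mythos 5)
Your proposal is correct and follows essentially the same route as the paper's proof: invoke the single-operator block-diagonalization theorem to obtain a chart adapted to the characteristic web of $\boldsymbol{K}_1$, then use commutativity to show each generalized eigen-distribution $\mathcal{D}_i(\boldsymbol{K}_1)$ is $\boldsymbol{K}_\alpha$-invariant, so the same chart block-diagonalizes every $\boldsymbol{K}_\alpha$. You merely make explicit two steps the paper leaves implicit (the pointwise commutation with $(\boldsymbol{K}_1-\lambda_i\boldsymbol{I})^{\rho_i}$ and the identification $\partial/\partial x^{i,k}\in\mathcal{D}_i$ via the annihilators), which is a welcome clarification rather than a deviation.
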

\begin{proof}
Assuming that  condition \eqref{eq:44-1} is satisfied, Proposition \ref{prop:1} ensures the existence of an equivalence class of integrable frames and local charts where the  operator $\boldsymbol{K}_1$ takes a block-diagonal form.
Such coordinates  are adapted to the characteristic web  associated with the spectral decomposition of $\boldsymbol{K}_1$:
\begin{equation}
\label{eq:decomp}
T_{\boldsymbol{x}} M= \mathcal{D}_{i}(\boldsymbol{x})\bigoplus \mathcal{E}_{i}(x)=\bigoplus_{i=1}^{s}\mathcal{D}_{i}(\boldsymbol{x}) \ .
\end{equation}
with
\begin{equation}
\label{eq:K1chart}
\boldsymbol{x}= (\boldsymbol{x}^1,\ldots,\boldsymbol{x}^{i}, \ldots,\boldsymbol{x}^{s}) \ .
\end{equation}
The variables $\boldsymbol{x}^{i}=(x^{i,1},\ldots,  x^{i,r_{i}})$ are defined over the integral leaves
 of the eigen-distribution $\mathcal{D}_{i}$, whereas the remaining ones, namely
\[
(\boldsymbol{x}^1,\ldots, \boldsymbol{x}^{i-1}, \boldsymbol{x}^{i+1},\ldots,\boldsymbol{x}^{s})
\]
 are coordinates of the leaves, i.e.,~are constant
 $(\boldsymbol{x}^1=\boldsymbol{c}^1,\ldots, \boldsymbol{x}^{i-1}=\boldsymbol{c}^{i-1}, \boldsymbol{x}^{i+1}=\boldsymbol{c}^{i+1}, \ldots,\boldsymbol{x}^{s}=\boldsymbol{c}^{s})$
 on each leaf $D_{i}(\boldsymbol{c})$ of the foliation. 
Here
\noi  $\boldsymbol{c}:=(\boldsymbol{c}^1,\ldots, \boldsymbol{c}^{i-1}, \boldsymbol{c}^{i+1}, \ldots,\boldsymbol{c}^{s}) $.
Since all operators of the set $\mathcal{F}$ commute, every distribution $\mathcal{D}_{i}$ is invariant under the action of the operators $\{\boldsymbol{K}_2,\ldots,\boldsymbol{K}_w\}$. As a direct consequence of this property, all the operators  $\boldsymbol{K}_\alpha\in \mathcal{F}$ in the local chart \eqref{eq:K1chart}  take a block-diagonal form, where the $i$-th ($r_i \times r_i$)   block matrix $[\boldsymbol{K}_{\alpha}^{(i)}]_{ \vert _{D_{i}(\boldsymbol{c})}}$ coincides with the matrix $[\boldsymbol{K}_{\alpha \vert _{D_{i}(\boldsymbol{c})}}]$, $i=1,\ldots,s$, which is the representation of the restricted operator $\bs{K}_\alpha$ to the leaf $D_{i}(\boldsymbol{c})$.
\end{proof}

\begin{theorem}\label{th:2}
Let  $\mathcal{S}=\{\boldsymbol{K}_1,\ldots,\boldsymbol{K}_w\}$, $\boldsymbol{K}_\alpha: \mathfrak{X}(M)\rightarrow \mathfrak{X}(M)$,  be  a set  of commuting   operator fields. If

(i) all the operators of the family have  vanishing generalized torsion of level $l$:
\beq \label{eq:5.5}
\tau^{(l)}_{\bs{K}_\alpha}(X,Y)=\bs{0} \ , \qquad \alpha=1,\ldots,q , \qquad X, Y \in \mathfrak{X}(M)\rightarrow \mathfrak{X}(M)
\eeq

(ii) all possible nontrivial intersections of their generalized eigen-distributions
\begin{equation}
\label{eq:Va}
\mathcal{V}_a(\boldsymbol{x}):=  \bigoplus_{i_1,\ldots,i_w}^{s_1,\ldots,s_w}\mathcal{D}_{i_1}^{(1)}(\boldsymbol{x}) \bigcap  \ldots      \bigcap \mathcal{D}_{i_w}^{(w)}(\boldsymbol{x}) \ ,\qquad a=1,\ldots, v\leq n
\end{equation}
are mutually integrable,
 then  there exist  sets of local coordinates, adapted to the decomposition
\begin{equation}
 \label{eq:TVa}
T_{\boldsymbol{x}}M= \bigoplus_{a=1}^{v}\mathcal{V}_a( \boldsymbol{x}) \qquad  \boldsymbol{x} \in M ,
\end{equation}
in which all operators $\boldsymbol{K}_\alpha$  admit  simultaneously a block-diagonal form with  possibly finer blocks.
\end{theorem}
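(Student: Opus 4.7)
The strategy is to refine the simultaneous block decomposition provided by Theorem \ref{th:1} by exploiting the fact that here \emph{all} operators $\boldsymbol{K}_\alpha$ satisfy a vanishing torsion condition, not merely one distinguished operator. As a preliminary step, hypothesis (i) together with Theorem \ref{th:MainR} ensures that every generalized eigen-distribution $\mathcal{D}_{i_\alpha}^{(\alpha)}$ of each $\boldsymbol{K}_\alpha$ is integrable, while commutativity of the family forces each $\mathcal{D}_{i_\alpha}^{(\alpha)}$ to be invariant under every $\boldsymbol{K}_\beta$. Consequently each intersection $\mathcal{V}_a$ defined in \eqref{eq:Va} is invariant under every operator of $\mathcal{S}$, and the pointwise decomposition \eqref{eq:TVa} is compatible with the whole family.

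Next I would introduce the refined \emph{characteristic distributions} $\mathcal{F}_a := \bigoplus_{b \neq a} \mathcal{V}_b$, which yield the dual splitting $T_{\boldsymbol{x}}^* M = \bigoplus_{a=1}^v \mathcal{F}_a^{\circ}(\boldsymbol{x})$. Hypothesis (ii), i.e.\ the mutual integrability of the $\mathcal{V}_a$, is tantamount to the integrability of every such $\mathcal{F}_a$, so one obtains a refined characteristic web $\{\mathrm{F}_1, \ldots, \mathrm{F}_v\}$. Adapting the argument employed in the proof of Theorem \ref{prop:1}, I would invoke Frobenius simultaneously on the $\mathcal{F}_a$ to construct a local chart $(\boldsymbol{y}^1, \ldots, \boldsymbol{y}^v)$, with $\boldsymbol{y}^a = (y^{a,1}, \ldots, y^{a, \dim \mathcal{V}_a})$, in which the coordinate blocks $\boldsymbol{y}^b$ for $b \neq a$ are constant on the leaves of $\mathrm{F}_a$. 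By construction, the natural-frame vectors $\partial/\partial y^{a,k}$ span $\mathcal{V}_a$ at every point.

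Finally I would verify that in this chart every $\boldsymbol{K}_\alpha$ is simultaneously block-diagonal with blocks indexed by $a = 1, \ldots, v$. This is immediate from the $\boldsymbol{K}_\alpha$-invariance of each $\mathcal{V}_a$ established in the first step, which precludes off-block entries. The resulting blocks are genuinely finer than those of Theorem \ref{th:1}, since each $\mathcal{V}_a$ is the intersection of one eigen-distribution of each $\boldsymbol{K}_\alpha$ and is therefore contained in every individual $\mathcal{D}_{i_\alpha}^{(\alpha)}$ participating in the intersection.

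The main obstacle, in my view, is the simultaneous adaptation of a single coordinate chart to \emph{all} of the foliations $\mathrm{F}_a$ at once: even if each $\mathcal{V}_a$ is integrable in isolation, that alone does not provide coordinates in which every $\mathcal{V}_a$ is coordinate-aligned. The strength of hypothesis (ii), which requires integrability of every direct sum $\bigoplus_{a \in I} \mathcal{V}_a$, is precisely what Frobenius needs in order to produce such a chart. This is also the reason why hypothesis (ii) is stated as an independent assumption rather than being inferred from Lemma \ref{th:HaanA}, where intersection-level integrability was only claimed and not combined with direct-sum integrability in the general non-Abelian setting.
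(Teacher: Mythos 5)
Your proposal is correct and follows essentially the same route as the paper: both arguments rest on the invariance of the intersections $\mathcal{V}_a$ under the whole commuting family, their involutivity via Theorem \ref{th:MainR}, and hypothesis (ii) to produce a single chart adapted to $T_{\boldsymbol{x}}M=\bigoplus_{a}\mathcal{V}_a$, from which block-diagonality is immediate. The only difference is presentational: the paper reaches the decomposition iteratively (first adapting to the characteristic web of $\boldsymbol{K}_1$ via Theorem \ref{th:1}, then refining leaf by leaf with $\boldsymbol{K}_2,\ldots,\boldsymbol{K}_w$), whereas you pass directly to the full intersection decomposition --- a legitimate streamlining, since the paper's final step is the same appeal to the mutual integrability of the $\mathcal{V}_a$.
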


\begin{proof} According to Theorem \ref{th:1}, since by hypothesis $\tau^{(l)}_{\bs{K}_1}(X,Y)=\bs{0}$, $X, Y \in \mathfrak{X}(M)$, all the operators take simultaneously a block-diagonal form in a suitable local chart.  Assume also that $\tau^{(l)}_{\bs{K}_2}(X,Y)=\bs{0}$. Then, the tangent space at any point $\boldsymbol{x} $ admits the finer decomposition
\begin{equation}
 \label{eq:TMinters12}
T_{\boldsymbol{x}}M= \bigoplus_{i_1,i_2}^{s_1,s_2}\mathcal{D}_{i_1}^{(1)}(\boldsymbol{x}) \cap \mathcal{D}_{i_2}^{(2)}(\boldsymbol{x}) \ ,
\end{equation}
where $\mathcal{D}_{i_2}^{(2)}$ are  the eigen-distributions    of $\boldsymbol{K}_2$. These eigen-distributions are integrable
by virtue of Theorem \ref{th:MainR}.
Consequently, the Haantjes Theorem can also be applied to the restriction of $\boldsymbol{K}_2$ to  $D^{(1)}_{i_1}(\boldsymbol{c}_1)$. Therefore, there exists a transformation of coordinates, acting only on the coordinates over the leaves of the foliation $D^{(1)}_{i_1}$
\begin{equation}
\Phi: M\rightarrow M, \qquad
(\boldsymbol{x}^1,\ldots,\boldsymbol{x}^{i_1}, \ldots,\boldsymbol{x}^{s_1}) \mapsto
(\boldsymbol{x}^1,\ldots,\boldsymbol{y}^{i_1}, \ldots,\boldsymbol{x}^{s_1}) \ ,
\end{equation}
such that the new coordinates $\boldsymbol{y}^{i_1}= (y^{i_1, 1},\ldots,y^{i_1 ,r_{i_1}})=\boldsymbol{f}^{i_1}(\boldsymbol{x}^{i_1})$ are
adapted to the decomposition
\begin{equation}
 \label{eq:TMinters12}
T_{\boldsymbol{x}}D^{(1)}_{i_1}(\boldsymbol{c}_1)= \bigoplus_{i_2}^{s_2}\mathcal{D}_{i_1}^{(1)}(\boldsymbol{x}) \cap \mathcal{D}_{i_2}^{(2)}(\boldsymbol{x}), \qquad \boldsymbol{x} \in
D^{(1)}_{i_1}(\boldsymbol{c}_1)\ .
\end{equation}
 Thus, we have
    \begin{equation}
\label{eq:K2diag}
 [\boldsymbol{K}^{(i_1)}_{\alpha}]=
\left[\begin{array}{c|c|c}
\boldsymbol{K}_{\alpha}^{(i_1,1)} & 0 & 0 \\
\hline 0 & \ddots & 0 \\
\hline 0 & 0 & \boldsymbol{K}_{\alpha}^{(i_1,s_2)}
\end{array}\right] \ , \qquad \alpha=1, \dots, w
\end{equation}
 where $[\boldsymbol{K}_{\alpha}^{(i_1,j)}]_
 { \vert _{D_{i_1}^{(1)}(\boldsymbol{c}_1) \cap D_{j}^{(2)}(\boldsymbol{c}_2)}} =\left[\boldsymbol{K}_
 {\alpha \vert _{D_{i_1}^{(1)}(\boldsymbol{c}_1) \cap D_{j}^{(2)}(\boldsymbol{c}_2)}} \right]$,
 $j=1,\ldots,s_2$.
  Let us consider  the decomposition
\begin{equation}
 \label{eq:TMinters12}
T_{\boldsymbol{x}}M= \bigoplus_{i_1,i_2}^{s_1,s_2}\mathcal{D}_{i_1}^{(1)}(\boldsymbol{x}) \cap \mathcal{D}_{i_2}^{(2)}(\boldsymbol{x})=
\bigoplus_{\gamma=1}^{u}\mathcal{U}_\gamma (\boldsymbol{x})\qquad \boldsymbol{x} \in M\ ,
\end{equation}
where in the direct sum \eqref{eq:TMinters12}  $\mathcal{U}_\gamma \neq\{\boldsymbol{0}\}$, $u\leq n$ and $r_\gamma$ denotes the rank of  $\mathcal{U}_\gamma$ ($\sum_{\gamma=1}^u r_\gamma=n$). Clearly, the  distributions  $\mathcal{U}_\gamma$ are invariant under the action of each operator $\boldsymbol{K}_\alpha\in\mathcal{S}$. Besides, these distributions are involutive and are realized as the intersection of involutive distributions. By assumption, they are also \emph{mutually} integrable. Therefore, there exist  local charts in $M$ of the form
\begin{equation}
\label{eq:diag12}
\{ U ,(y^{1,1},\ldots,y^{1,r_{1}}; \dots;y^{i_1, 1},\ldots,y^{i_1 ,r_{i_1}};\ldots ;y^{s_{1},1},\dots,y^{s_{1},r_{s_1}})\}
\end{equation}
adapted to the decomposition \eqref{eq:TMinters12}, where all the operators $\boldsymbol{K}_\alpha\in\mathcal{S}$  admit simultaneously a (possibly) finer block-diagonal form.
By extending the previous procedure to the Haantjes operators $\boldsymbol{K}_3, \ldots, \boldsymbol{K}_w$, we obtain the decomposition
\begin{equation}
 \label{eq:TMinters}
T_{\boldsymbol{x}}M= \bigoplus_{i_1,\ldots,i_w}^{s_1,\ldots,s_w}\mathcal{D}_{i_1}^{(1)}(\boldsymbol{x}) \bigcap  \ldots      \bigcap \mathcal{D}_{i_w}^{(w)}(\boldsymbol{x})= \bigoplus_{a=1}^{v}\mathcal{V}_a(  \boldsymbol{x})  ,
\end{equation}
where in the  direct sum \eqref{eq:TMinters} $\mathcal{V}_a \neq \{\boldsymbol{0}\}$, $v\leq n$ and $r_a$ denotes the rank of $\mathcal{V}_a$  ($\sum_{a=1}^v r_a=n$).
Then, as the involutive distributions $\mathcal{V}_a$ by assumption are also mutually integrable,  there exist  local charts
\begin{equation}
\label{eq:HchartK}
\{ U,(\boldsymbol{y}^{1},\ldots , \boldsymbol{y}^{a},\ldots,   \boldsymbol{y}^{v}) \} 
\end{equation}
adapted to the decomposition \eqref{eq:TMinters}, such that
\begin{equation}
 \label{eq:Va}
\mathcal{V}_a=\left \langle  \frac{\partial}{\partial y^{a, 1}},\ldots , \frac{\partial}{\partial y^{a, r_a}} \right  \rangle  \qquad a=1,\ldots,v,
\end{equation}
where the natural frame $\big\{ \frac{\partial}{\partial y^{a, 1}},\ldots , \frac{\partial}{\partial y^{a, r_a}} \big\}$ over the leaves of $\mathcal{V}_a$ is formed by joint \emph{generalized} eigenvector fields of the operators $\{\boldsymbol{K}_1,\ldots,\boldsymbol{K}_w\}$.
\end{proof}

\begin{remark}
Given a cyclic generalized Haantjes algebra $(M, \mathscr{H}^{(l)})$ with generator $\bs{L}$, the hypothesis $(ii)$ of Theorem \ref{th:2} is automatically satisfied. Indeed, the intersections $\mathcal{V}_a$ of eigen-distributions of the operators forming the algebra $\mathscr{H}^{(l)}$ either coincide with the eigen-distributions of the 
generator $\bs{L}$ or are direct sums of them. In both cases, since (due to  Theorem \ref{th:MainR}) the eigen-distributions of $\bs{L}$ are mutually integrable, it turns out that the direct sums of the intersections $\mathcal{V}_a$ are integrable as well.

\end{remark}

In the following sections, we shall illustrate the results proposed. First, we will show some examples of generalized Haantjes algebras of operator fields. Then, we shall present the explicit construction of local charts of coordinates allowing their simultaneous block-diagonalization.

\section{Block-diagonalization of generalized Haantjes algebras}

As stated in Theorems \ref{th:1} and \ref{th:2}, we can block-diagonalize families of commuting generalized Nijenhuis operators by constructing a suitable coordinate chart. To this aim, we propose the following procedure.

\vspace{2mm}

1) Determine the joint eigen-distributions of the given family of operator fields \\
2) Determine a basis of one-forms for each of the the corresponding annihilators \\
3) Integrate them to find the characteristic coordinates \\
4) Compute the expression of the operators of the given family in these coordinates

\vspace{2mm}

To illustrate this procedure, we shall first construct families of operator fields, depending on arbitrary functions, having a prescribed vanishing higher-level torsion. A direct approach consists in considering an operator field whose entries are all arbitrary functions, and imposing both that a certain higher-level torsion is vanishing, and simultaneously that the lower-level ones are not.
In this way, one can obtain a set of differential constraints that can be solved (possibly with suitable ans\"atze) to obtain families of operator fields with a vanishing higher-level torsion, still depending on arbitrary functions. Generally speaking, there is a lot of freedom in this approach.  

From this analysis it emerges that generalized Haantjes algebras are not rare. Thus, by specializing the arbitrary functions contained in the families of operators determined according to the procedure proposed, we can obtain infinitely many examples of new, higher-level Haantjes algebras.

\subsection{A level-three generalized Haantjes algebra}

Let $M$ be a five-dimensional differentiable manifold. A first example is given, in local coordinates $\bs{x}=(x^{1}, x^{2}, x^{3}, x^{4},x^{5})$ by the family of operators
\begin{equation}
\bs{L}^{(\bs{x})} = \label{eq:LTA}
\left[\begin{array}{ccccc}
f_{1} & 1 & 0 & 1 & 0 \\
f_{1} - f_{2} + 1 & f_{1} + 1 & - f_{3} & f_{1} - f_{2} + 1 & - f_{3} \\
1 & 0  & f_{2} + f_{3} & 1 & f_{3} \\
f_{2} - f_{1} & -1  & f_{3} & f_{2} - 1 & f_{3} \\
- 1 & 0 & - f_{3} - 1 & - 1 & f_{2} - f_{3} - 1 \\
\end{array}\right],
\end{equation}
where $f_{1} = f_{1} (x^{1}, x^{2})$, $f_{2} = f_{2} (x^{3}, x^{4}, x^{5})$ and $f_{3} = f_{3} (x^{3})$ are arbitrary functions. 

A direct computation allows us to prove the following result.
\begin{proposition}
The family of operators $\bs{L}$ satisfies the following properties:
\begin{itemize}
\item  $\tau^{(3)}_{\bs{L}} (X,Y) = 0$ 
\item $\mathcal{H}_{\bs{A}}(X,Y)\neq 0$ if and only if $f_{3}' \neq 0$.
\end{itemize}
\end{proposition}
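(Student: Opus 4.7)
The plan is a direct symbolic verification using the recursive Definition \ref{df:mtorsion}. Since each generalized torsion $\tau^{(m)}_{\bs{L}}$ is a skew-symmetric $(1,2)$-tensor, by multilinearity it is enough to evaluate it on the coordinate basis pairs $(\partial_i,\partial_j)$ with $1\leq i<j\leq 5$. As $[\partial_i,\partial_j]=0$, the starting point $\tau^{(0)}_{\bs{L}}(\partial_i,\partial_j)=0$ makes the Nijenhuis torsion reduce to
\[
\tau_{\bs{L}}(\partial_i,\partial_j)=[\bs{L}\partial_i,\bs{L}\partial_j]-\bs{L}[\partial_i,\bs{L}\partial_j]-\bs{L}[\bs{L}\partial_i,\partial_j],
\]
in which only the partial derivatives of the entries of $\bs{L}$ and matrix products appear. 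With these ten components in hand, I would apply the recursion twice more: once to produce $\mathcal{H}_{\bs{L}}=\tau^{(2)}_{\bs{L}}$, and once to produce $\tau^{(3)}_{\bs{L}}$. Each step replaces brackets with combinations of the previously computed level, evaluated on the pairs $(\partial_i,\bs{L}\partial_j),(\bs{L}\partial_i,\partial_j),(\bs{L}\partial_i,\bs{L}\partial_j)$; these reduce, via $C^\infty(M)$-bilinearity and the Leibniz rule for Lie brackets with variable-coefficient fields, to sums of components already known, weighted by entries of $\bs{L}$ and their derivatives.

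For the first bullet, the goal is to show that every one of the ten vector-valued components of $\tau^{(3)}_{\bs{L}}(\partial_i,\partial_j)$ collapses to zero identically in $f_1(x^1,x^2)$, $f_2(x^3,x^4,x^5)$, $f_3(x^3)$ and their derivatives. The separation of the dependencies of $f_1,f_2,f_3$ on disjoint (or nearly disjoint) coordinate blocks is what produces the cancellations: many cross-terms vanish at the Nijenhuis level just because the relevant partial derivatives are zero, and the rest cancel after the two successive applications of the recursion.

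For the second bullet, the same computation of $\mathcal{H}_{\bs{L}}$ should reveal that every surviving component carries $f_3'(x^3)$ as a common factor. This immediately gives the ``if'' direction: $f_3'\equiv 0$ forces $\mathcal{H}_{\bs{L}}=0$. For the ``only if'' direction, one exhibits a single component of $\mathcal{H}_{\bs{L}}(\partial_i,\partial_j)$ whose coefficient of $f_3'$ is a nonzero rational expression in the entries of $\bs{L}$ at generic points, so that $f_3'\not\equiv 0$ at some point guarantees $\mathcal{H}_{\bs{L}}\neq 0$ there.

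The main obstacle is purely the algebraic bulk: a $5\times 5$ matrix with entries depending on three groups of variables produces formidable intermediate expressions, especially at level three, where thousands of monomials in the partial derivatives of $f_1,f_2,f_3$ are generated before simplification. I would accordingly carry out the symbolic manipulation with a computer-algebra system (Maple, Mathematica or SageMath), using the recursive definition as the implementation blueprint and checking the final cancellations component by component. The conceptual content of the proposition is modest — it is an existence statement exhibiting a nontrivial example — but the verification is unavoidable and computer-assisted.
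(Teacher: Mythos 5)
Your proposal is correct and matches the paper, which proves this proposition only by asserting ``a direct computation'' — namely the componentwise symbolic evaluation of the recursively defined torsions on the coordinate frame, exactly as you lay out. (Minor quibble: your ``if''/``only if'' labels for the second bullet are swapped, but both directions are in fact addressed.)
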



The minimal polynomial of $\bs{L}$ is of fifth degree  and reads
\begin{equation}
m_{\boldsymbol{L}} (\boldsymbol{x}, \lambda) = (\lambda-f_1-1)(\lambda-f_1+1)\left( \lambda - f_{2} + 1 \right)(\lambda - f_{2})^{2}  \ .
\end{equation}
Thus, it coincides with the characteristic polynomial; therefore $\bs{L}$ is a cyclic operator. Also, notice that $\boldsymbol{L}$ is generically non-semisimple. The eigenvalues and generalized eigenvectors of $\bs{L}$ are
\begin{equation}
\begin{aligned}
\lambda_{1} = f_{1} + 1, \quad \rho_{1} = 1 & \Longrightarrow \cD_{1} = \bigg\langle \dfrac{\partial}{\partial x^{1}} + 2 \dfrac{\partial}{\partial x^{2}} - \dfrac{\partial}{\partial x^{4}} \bigg\rangle, \\
\lambda_{2} = f_{1} - 1, \quad \rho_{2} = 1 & \Longrightarrow \cD_{2} = \bigg\langle \dfrac{\partial}{\partial x^{1}} - \dfrac{\partial}{\partial x^{4}} \bigg\rangle, \\
\lambda_{3} = f_{2} - 1, \quad \rho_{3} = 1 & \Longrightarrow \cD_{3} = \bigg\langle f_{3} \dfrac{\partial}{\partial x^{2}} - f_{3} \dfrac{\partial}{\partial x^{4}} + \dfrac{\partial}{\partial x^{5}} \bigg\rangle, \\
\lambda_{4} = f_{2}, \quad \rho_{4} = 2 & \Longrightarrow \cD_{4} = \bigg\langle \dfrac{\partial}{\partial x^{3}} - \dfrac{\partial}{\partial x^{5}}, \dfrac{\partial}{\partial x^{2}} - \dfrac{\partial}{\partial x^{4}}\bigg\rangle.
\end{aligned}
\end{equation}

From these eigen-distributions, we can compute  their characteristic distributions and consequently their annihilators:
\begin{equation} \label{eq:ann5}
\begin{aligned}
& \mathcal{E}_{1} = \oplus_{i \neq 1} \cD_{i} = \bigg\langle \dfrac{\partial}{\partial x^{1}} - \dfrac{\partial}{\partial x^{4}}, f_{3} \dfrac{\partial}{\partial x^{2}} - f_{3} \dfrac{\partial}{\partial x^{4}} + \dfrac{\partial}{\partial x^{5}}, \dfrac{\partial}{\partial x^{2}} - \dfrac{\partial}{\partial x^{4}}, \dfrac{\partial}{\partial x^{3}} - \dfrac{\partial}{\partial x^{5}} \bigg\rangle \\
& \hspace{2cm} \Longrightarrow \mathcal{E}_{1}^{\circ} = \langle dx^{1} + dx^{2} + dx^{4} \rangle \\
& \mathcal{E}_{2} = \oplus_{i \neq 2} \cD_{i} = \bigg\langle \dfrac{\partial}{\partial x^{1}} + 2 \dfrac{\partial}{\partial x^{2}} - \dfrac{\partial}{\partial x^{4}}, f_{3} \dfrac{\partial}{\partial x^{2}} - f_{3} \dfrac{\partial}{\partial x^{4}} + \dfrac{\partial}{\partial x^{5}}, \dfrac{\partial}{\partial x^{2}} - \dfrac{\partial}{\partial x^{4}}, \dfrac{\partial}{\partial x^{3}} - \dfrac{\partial}{\partial x^{5}} \bigg\rangle \\
& \hspace{2cm} \Longrightarrow \mathcal{E}_{2}^{\circ} = \langle dx^{1} - dx^{2} - dx^{4} \rangle \\
& \mathcal{E}_{3} = \oplus_{i \neq 3} \cD_{i} = \bigg\langle \dfrac{\partial}{\partial x^{1}} + 2 \dfrac{\partial}{\partial x^{2}} - \dfrac{\partial}{\partial x^{4}}, \dfrac{\partial}{\partial x^{1}} - \dfrac{\partial}{\partial x^{4}}, \dfrac{\partial}{\partial x^{2}} - \dfrac{\partial}{\partial x^{4}}, \dfrac{\partial}{\partial x^{3}} - \dfrac{\partial}{\partial x^{5}} \bigg\rangle \\
& \hspace{2cm} \Longrightarrow \mathcal{E}_{3}^{\circ} = \langle dx^{3} + dx^{5} \rangle, \\
& \mathcal{E}_{4} = \oplus_{i \neq 4} \cD_{i} = \bigg\langle \dfrac{\partial}{\partial x^{1}} + 2 \dfrac{\partial}{\partial x^{2}} - \dfrac{\partial}{\partial x^{4}}, \dfrac{\partial}{\partial x^{1}} - \dfrac{\partial}{\partial x^{4}}, f_{3} \dfrac{\partial}{\partial x^{2}} - f_{3} \dfrac{\partial}{\partial x^{4}} + \dfrac{\partial}{\partial x^{5}} \bigg\rangle \\
& \hspace{2cm} \Longrightarrow \mathcal{E}_{4}^{\circ} = \langle dx^{3}, dx^{1} + dx^{4} + f_{3} dx^{5} \rangle.
\end{aligned}
\end{equation}

\begin{remark} We can easily extract an Abelian cyclic algebra $\mathscr{H}^{(3)}$ from the class \eqref{eq:LTA}. In order to achieve that, we set $f_{3} = x^{3}$ for all the operators of the algebra and vary the other arbitrary functions. Observe that  all the operators of the family obtained in this way will share both the same eigen-distributions and the degree of the minimal polynomial. Thus, any of them generate the full algebra $\mathscr{H}^{(3)}$  with its independent powers. Notice that for this algebra the hypotheses of Theorem \ref{th:2} are all satisfied. 
\end{remark}
By way of an example of simultaneous diagonalization, we can consider the three commuting operators
\begin{equation}
\bs{L}_{1}^{(\bs{x})} =
\left[\begin{array}{ccccc}
x^{1} & 1 & 0 & 1 & 0 \\
x^{1} + 1 & x^{1} + 1 & - x^{3} & x^{1} + 1 & - x^{3} \\
1 & 0  & x^{3} & 1 & x^{3} \\
- x^{1} & -1  & x^{3} & - 1 & x^{3} \\
- 1 & 0 & - x^{3} - 1 & - 1 & - x^{3} - 1 \\
\end{array}\right],
\end{equation}
\begin{equation}
\bs{L}_{2}^{(\bs{x})} =
\left[\begin{array}{ccccc}
x^{2} & 1 & 0 & 1 & 0 \\
x^{2} - x^{5} + 1 & x^{2} + 1 & - x^{3} & x^{2} - x^{5} + 1 & - x^{3} \\
1 & 0  & x^{3} + x^{5} & 1 & x^{3} \\
- x^{2} + x^{5} & -1  & x^{3} & x^{5} - 1 & x^{3} \\
- 1 & 0 & - x^{3} - 1 & - 1 & - x^{3} + x^{5} - 1 \\
\end{array}\right] \ , 
\end{equation}
\begin{equation}
\bs{L}_{3}^{(\bs{x})} =
\left[\begin{array}{ccccc}
0 & 1 & 0 & 1 & 0 \\
- x^{4} - x^{3} x^{5} + 1 & 1 & - x^{3} & - x^{4} - x^{3} x^{5} + 1 & - x^{3} \\
1 & 0  & x^{4} + x^{3} \left( x^{5} + 1 \right) & 1 & x^{3} \\
x^{4} + x^{3} x^{5} & -1  & x^{3} & x^{4} + x^{3} x^{5} - 1 & x^{3} \\
- 1 & 0 & - x^{3} - 1 & - 1 & x^{4} + x^{3} \left( x^{5} - 1 \right) - 1 \\
\end{array}\right] \ .
\end{equation}

These operators are $C^{\infty}(\mathbb{R})$-linearly independent. Any of these three operators can be chosen to be the generator of a level-three Haantjes cyclic algebra $\mathscr{H}^{(3)}$ of rank $5$. A basis of this algebra is, for instance, $\mathscr{B}=\{\bs{I}, \bs{L}_{1}, \bs{L}_{1}^2,  \bs{L}_{1}^{3}, \bs{L}_{1}^{4} \}$. In the following subsection, we shall determine a set of coordinates which block-diagonalize the full algebra $\mathscr{H}^{(3)}$.  
 

\subsection{Block-diagonalization}
By integrating the annihilators \eqref{eq:ann5} with the choice $f_3=x^3$ of the  eigen-distributions of these operators, we obtain a set of coordinates which block-diagonalized the full  algebra $\mathscr{H}^{(3)}$. We have explicitly:
\begin{equation}
\begin{aligned}
& y^{1} = x^{1} + x^{2} + x^{4}, \\
& y^{2} = x^{1} - x^{2} - x^{4}, \\
& y^{3} = x^{3} + x^{5}, \\
& y^{4} = x^{3}, \\
& y^{5} = x^{1} + x^{4} + x^{3} x^{5}.
\end{aligned}
\end{equation}
In particular, in the block-separation coordinates, the operators $\bs{L}_{1}, \bs{L}_{2}, \bs{L}_{3}$ read:
\begin{equation}
\bs{L}_{1}^{(\bs{y})} =
\left[\begin{array}{ccccc}
\frac{1}{2} \left(y^{1} + y^{2} \right) + 1 & \tempr & 0 & 0 & 0 \\ \cline{1-2}
\templ & \frac{1}{2} \left(y^{1} + y^{2} \right) - 1 & \tempr & 0 & 0 \\ \cline{2-3}
0 & \templ  & - 1 & \tempr & 0 \\ \cline{3-5}
0 & 0  & \templ & - y^{3} + 2 y^{4} & 1 \\
0 & 0 & \templ & - \left( y^{3} - 2 y^{4} \right)^{2} & y^{3} - 2 y^{4} \\
\end{array}\right],
\end{equation}
\begin{equation}
\bs{L}_{2}^{(\bs{y})} =
\left[\begin{array}{ccccc}
y^{1} + ( y^{3} - y^{4} ) y^{4} - y^{5} + 1 & \tempr & 0 & 0 & 0 \\ \cline{1-2}
\templ & y^{1} + ( y^{3} - y^{4} ) y^{4} - y^{5} - 1 & \tempr & 0 & 0 \\ \cline{2-3}
0 & \templ  & y^{3} - y^{4} - 1 & \tempr & 0 \\ \cline{3-5}
0 & 0  & \templ & y^{4} & 1 \\
0 & 0 & \templ & - \left( y^{3} - 2 y^{4} \right)^{2} & 2 y^{3} - 3 y^{4} \\
\end{array}\right] \ ,%
\end{equation}
\begin{equation}
\bs{L}_{3}^{(\bs{y})} =
\left[\begin{array}{ccccc}
1 & \tempr & 0 & 0 & 0 \\ \cline{1-2}
\templ & - 1 & \tempr & 0 & 0 \\ \cline{2-3}
0 & \templ  & - \frac{1}{2} (y^{1} + y^{2}) + y^{5} - 1 & \tempr & 0 \\ \cline{3-5}
0 & 0  & \templ & - \frac{1}{2} (y^{1} + y^{2}) - y^{3} + 2 y^{4} + y^{5} & 1 \\
0 & 0 & \templ & - \left( y^{3} - 2 y^{4} \right)^{2} & - \frac{1}{2} (y^{1} + y^{2}) + y^{3} - 2 y^{4} + y^{5} \\
\end{array}\right] \ .
\end{equation}
Observe that more generally, in these coordinates, the full class \eqref{eq:LTA} takes the block-diagonal form
\begin{equation}
\bs{L}^{(\bs{y})} =
\left[\begin{array}{ccccc}
f_{1} + 1 & \tempr & 0 & 0 & 0 \\ \cline{1-2}
\templ & f_{1} - 1 & \tempr & 0 & 0 \\ \cline{2-3}
0 & \templ  & f_{2} - 1 & \tempr & 0 \\ \cline{3-5}
0 & 0  & \templ & f_{2} + f_{3} - (y^{3} - y^{4}) f'_{3} & 1 \\
0 & 0 & \templ & - \left( f_{3} - (y^{3} - y^{4}) f'_{3} \right)^{2} & f_{2} - f_{3} + (y^{3} - y^{4}) f'_{3} \\
\end{array}\right],
\end{equation}
where $f_{1} = f_{1} \left(\frac{1}{2} (y^{1} + y^{2}), y^{1} + (y^{3} - y^{4}) f_{3} - y^{5} \right)$, $f_{2} = f_{2} \left(y^{4}, - \frac{1}{2} (y^{1} + y^{2}) - (y^{3} - y^{4}) f_{3} + y^{5}, y^{3} - y^{4} \right)$ and $f_{3} = f_{3} (y^{4})$ are arbitrary functions (with $f_{3}' \neq 0$).

\section{A level-four generalized Haantjes algebra}

Let us consider the following fourth-level generalized Nijenhuis family of operator fields defined over a seven-dimensional differentiable manifold $M$, which is obtained with the same procedure described in the previous section:

\begin{equation} \label{eq:LFA1}
\boldsymbol{K}^{(\bs{x})} =
\left[\begin{array}{ccccccc}
g_{1} & x^{1} & x^{1} + g_{2} & - x^{1} & x^{1} & -x^{1} & x^{1} \\
1 & g_{5} & - g_{1} + g_{5} + x^{1} & g_{4} - g_{5} & 0 & g_{1} - g_{5} - x^{1} & - 1 - g_{4} + g_{5} \\
0 & 0 & g_{1} & 0 & 0 & 0 & 0 \\
1 + g_{1} - g_{4} & x^{1} & x^{1} + g_{2} & - 1 - x^{1} + g_{4} & x^{1} & -x^{1} & x^{1} \\
0 & g_{1} - g_{5} & g_{1} + g_{3} - g_{5} + \frac{1}{x^{1}} & - g_{1} + g_{5} & g_{1} & - g_{1} + g_{5} - \frac{1}{x^{1}} & g_{1} - g_{5} \\
0 & 0 & x^{1} & 0 & 0 & g_{1} - x^{1} & 0 \\
g_{1} - g_{4} & x^{1} & x^{1} + g_{2} & - 1 - x^{1} & x^{1} & -x^{1} & 1 + x^{1} + g_{4} \\
\end{array}\right].
\end{equation}

Here $g_{1} = g_{1} ( x^{1}, x^{2}, x^{3}, x^{4} )$,  $g_{2} = g_{2} ( x^{1}, x^{2}, x^{3}, x^{4} )$, $g_{3} = g_{3} ( x^{1}, x^{2}, x^{3}, x^{4} )$, $g_{4} = g_{4} ( x^{5}, x^{6} )$, $g_{5} = g_{5} ( x^{7} )$. 

As a result of a direct computation, one can prove the following 
\begin{proposition}
The operator family \eqref{eq:LFA1} satisfies the  property:
\[
\tau^{(4)}(\bs{K})=0 \ .
\]
In addition, its third level torsion does not vanish whenever  $g_{3} \neq 0$.
\end{proposition}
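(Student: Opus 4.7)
The plan is to verify both claims by reducing them to a finite check on basis vector fields and then exploiting the tower structure of generalized Nijenhuis torsions. Since $\tau^{(m)}_{\bs{K}}$ is a skew-symmetric $C^{\infty}(M)$-bilinear $(1,2)$-tensor, it suffices to evaluate it on the pairs $\bigl(\frac{\partial}{\partial x^{i}}, \frac{\partial}{\partial x^{j}}\bigr)$ with $1\leq i<j\leq 7$. By Lemma \ref{lemma2} the torsions obey the recursion $\tau^{(m)}_{\bs{K}} = R_{\sigma}\,\tau^{(m-1)}_{\bs{K}}$ with $\sigma(z,\lambda,\mu)=(z-\lambda)(z-\mu)$, and hence $\tau^{(4)}_{\bs{K}} = R_{\sigma^{3}}\,\tau^{(1)}_{\bs{K}}$. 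The first step is therefore to compute the ordinary Nijenhuis torsion $\tau^{(1)}_{\bs{K}}$ on every such basis pair, writing each component as a polynomial in the partial derivatives of $g_{1},\dots,g_{5}$, and then to apply the operator $R_{\sigma}$ three times.

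An often cheaper spectral alternative is to first determine the characteristic polynomial and Jordan structure of $\bs{K}$, and then invoke Proposition \ref{lm:TmL2autog}: on any pair of generalized eigenvectors the level-$m$ torsion is a prescribed linear combination of commutators $[X_{\alpha-i}, Y_{\beta-j}]$ weighted by the nilpotent powers $(\bs{K}-\mu\bs{I})^{m-i}(\bs{K}-\nu\bs{I})^{m-j}$. Jordan blocks of $\bs{K}$ of size $\leq 2$ already annihilate many of the summands when $m=4$, and the residual symbolic identities can then be verified directly on a joint generalized eigenbasis.

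For the second assertion it suffices to exhibit a single pair of basis vector fields on which $\tau^{(3)}_{\bs{K}}$ is nonzero. Inspecting \eqref{eq:LFA1}, the function $g_{3}$ occurs in exactly one entry of $\bs{K}$, namely position $(5,3)$, where it enters through the combination $g_{1}+g_{3}-g_{5}+1/x^{1}$. I would therefore evaluate $\tau^{(3)}_{\bs{K}}$ on a basis pair involving $\frac{\partial}{\partial x^{3}}$ and isolate a monomial that is strictly proportional to $g_{3}$ (or to one of its derivatives) and cannot be cancelled by the other $g_{k}$, so that $g_{3}\neq 0$ forces $\tau^{(3)}_{\bs{K}}\neq 0$.

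The principal obstacle is computational bulk rather than conceptual depth: $\bs{K}$ is $7\times 7$, depends on five arbitrary functions of several variables, and each application of $R_{\sigma}$ roughly doubles the degree of the symbolic expressions involved. A hand computation is impractical; in practice the verification is best carried out with a computer algebra system, implementing either the recursion \eqref{GNTn} directly on the twenty-one basis pairs, or the polynomial representation \eqref{eq:rep} of the torsion tower.
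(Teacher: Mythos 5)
Your proposal is correct and matches the paper's approach: the paper establishes this proposition purely by direct (computer-assisted) symbolic computation, which is exactly what you describe, and your organizational remarks (reduction to the $21$ basis pairs by $C^{\infty}$-bilinearity, the factorization $\tau^{(4)}_{\bs{K}}=R_{\sigma^{3}}\tau^{(1)}_{\bs{K}}$ via Lemma \ref{lemma2}, and isolating the $g_{3}$-dependence at entry $(5,3)$ for the non-vanishing of $\tau^{(3)}_{\bs{K}}$) are all sound.
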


\subsection{Spectral analysis}
Its minimal polynomial is of seventh degree,
\begin{equation}
m_{\boldsymbol{K}} (\boldsymbol{x}, \lambda) =  ( \lambda - g_{5} )(\lambda-g_4-1)(\lambda-g_4+1) (\lambda - g_{1} + x^{1})(\lambda - g_{1})^{3},
\end{equation}
and its eigenvalues and generalized eigenvectors are
\begin{equation}
\begin{aligned}
\lambda_{1} = g_{5}, \quad \rho_{1} = 1 & \Longrightarrow \cD_{1} = \bigg\langle \dfrac{\partial}{\partial x^{2}} - \dfrac{\partial}{\partial x^{5}} \bigg\rangle, \\
\lambda_{2} = g_{4} + 1, \quad \rho_{2} = 1 & \Longrightarrow \cD_{2} = \bigg\langle \dfrac{\partial}{\partial x^{2}} - \dfrac{\partial}{\partial x^{7}} \bigg\rangle, \\
\lambda_{3} = g_{4} - 1, \quad \rho_{3} = 1 & \Longrightarrow \cD_{3} = \bigg\langle \dfrac{\partial}{\partial x^{2}} + 2 \dfrac{\partial}{\partial x^{4}} + \dfrac{\partial}{\partial x^{7}} \bigg\rangle, \\
\lambda_{4} = g_{1} - x^{1}, \quad \rho_{4} = 1 & \Longrightarrow \cD_{4} = \bigg\langle \dfrac{\partial}{\partial x^{1}} - (x^{1})^{2} \dfrac{\partial}{\partial x^{2}} + \dfrac{\partial}{\partial x^{4}} - \dfrac{\partial}{\partial x^{5}} - (x^{1})^2 \dfrac{\partial}{\partial x^{6}} + \dfrac{\partial}{\partial x^{7}} \bigg\rangle, \\
\lambda_{5} = g_{1}, \quad \rho_{5} = 3 & \Longrightarrow \cD_{5} = \bigg\langle \dfrac{\partial}{\partial x^{1}} + \dfrac{\partial}{\partial x^{4}} + \dfrac{\partial}{\partial x^{7}}, \dfrac{\partial}{\partial x^{5}}, \dfrac{\partial}{\partial x^{3}} + \dfrac{\partial}{\partial x^{6}} \bigg\rangle.
\end{aligned}
\end{equation}
The characteristic distributions and the corresponding annihilators are 
\begin{equation} \label{eq:ann7}
\end{equation}
\begin{align*}
& \mathcal{E}_{1} = \oplus_{i \neq 1} \cD_{i} = \bigg\langle \dfrac{\partial}{\partial x^{2}} - \dfrac{\partial}{\partial x^{7}}, \dfrac{\partial}{\partial x^{2}} + 2 \dfrac{\partial}{\partial x^{4}} + \dfrac{\partial}{\partial x^{7}}, \dfrac{\partial}{\partial x^{1}} - (x^{1})^{2} \dfrac{\partial}{\partial x^{2}} + \dfrac{\partial}{\partial x^{4}} - \dfrac{\partial}{\partial x^{5}} - (x^{1})^2 \dfrac{\partial}{\partial x^{6}} + \dfrac{\partial}{\partial x^{7}}, \\
& \hspace{9cm} \dfrac{\partial}{\partial x^{1}} + \dfrac{\partial}{\partial x^{4}} + \dfrac{\partial}{\partial x^{7}}, \dfrac{\partial}{\partial x^{3}} + \dfrac{\partial}{\partial x^{6}}, \dfrac{\partial}{\partial x^{5}} \bigg\rangle \\
& \hspace{2cm} \Longrightarrow \mathcal{E}_{1}^{\circ} = \langle dx^{2} + dx^{3} - dx^{4} - dx^{6} + dx^{7} \rangle \\
& \mathcal{E}_{2} = \oplus_{i \neq 2} D_{i} = \bigg\langle \dfrac{\partial}{\partial x^{2}} - \dfrac{\partial}{\partial x^{5}}, \dfrac{\partial}{\partial x^{2}} + 2 \dfrac{\partial}{\partial x^{4}} + \dfrac{\partial}{\partial x^{7}}, \dfrac{\partial}{\partial x^{1}} - (x^{1})^{2} \dfrac{\partial}{\partial x^{2}} + \dfrac{\partial}{\partial x^{4}} - \dfrac{\partial}{\partial x^{5}} - (x^{1})^2 \dfrac{\partial}{\partial x^{6}} + \dfrac{\partial}{\partial x^{7}}, \\
& \hspace{9cm} \dfrac{\partial}{\partial x^{1}} + \dfrac{\partial}{\partial x^{4}} + \dfrac{\partial}{\partial x^{7}}, \dfrac{\partial}{\partial x^{3}} + \dfrac{\partial}{\partial x^{6}}, \dfrac{\partial}{\partial x^{5}} \bigg\rangle \\
& \hspace{2cm} \Longrightarrow \mathcal{E}_{2}^{\circ} = \langle dx^{1} + dx^{4} - 2 dx^{7} \rangle \\
& \mathcal{E}_{3} = \oplus_{i \neq 3} \cD_{i} = \bigg\langle \dfrac{\partial}{\partial x^{2}} - \dfrac{\partial}{\partial x^{5}}, \dfrac{\partial}{\partial x^{2}} - \dfrac{\partial}{\partial x^{7}}, \dfrac{\partial}{\partial x^{1}} - (x^{1})^{2} \dfrac{\partial}{\partial x^{2}} + \dfrac{\partial}{\partial x^{4}} - \dfrac{\partial}{\partial x^{5}} - (x^{1})^2 \dfrac{\partial}{\partial x^{6}} + \dfrac{\partial}{\partial x^{7}}, \\
& \hspace{9cm} \dfrac{\partial}{\partial x^{1}} + \dfrac{\partial}{\partial x^{4}} + \dfrac{\partial}{\partial x^{7}}, \dfrac{\partial}{\partial x^{3}} + \dfrac{\partial}{\partial x^{6}}, \dfrac{\partial}{\partial x^{5}} \bigg\rangle \\
& \hspace{2cm} \Longrightarrow \mathcal{E}_{3}^{\circ} = \langle dx^{1} - dx^{4} \rangle, \\
& \mathcal{E}_{4} = \oplus_{i \neq 4} \cD_{i} = \bigg\langle \dfrac{\partial}{\partial x^{2}} - \dfrac{\partial}{\partial x^{5}}, \dfrac{\partial}{\partial x^{2}} - \dfrac{\partial}{\partial x^{7}}, \dfrac{\partial}{\partial x^{2}} + 2 \dfrac{\partial}{\partial x^{4}} + \dfrac{\partial}{\partial x^{7}}, \dfrac{\partial}{\partial x^{1}} + \dfrac{\partial}{\partial x^{4}} + \dfrac{\partial}{\partial x^{7}}, \dfrac{\partial}{\partial x^{3}} + \dfrac{\partial}{\partial x^{6}}, \dfrac{\partial}{\partial x^{5}} \bigg\rangle \\
& \hspace{2cm} \Longrightarrow \mathcal{E}_{4}^{\circ} = \langle dx^{3} - dx^{6} \rangle \\
& \mathcal{E}_{5} = \oplus_{i \neq 5} \cD_{i} = \bigg\langle \dfrac{\partial}{\partial x^{2}} - \dfrac{\partial}{\partial x^{5}}, \dfrac{\partial}{\partial x^{2}} - \dfrac{\partial}{\partial x^{7}}, \dfrac{\partial}{\partial x^{2}} + 2 \dfrac{\partial}{\partial x^{4}} + \dfrac{\partial}{\partial x^{7}}, \\
& \hspace{5cm} \dfrac{\partial}{\partial x^{1}} - (x^{1})^{2} \dfrac{\partial}{\partial x^{2}} + \dfrac{\partial}{\partial x^{4}} - \dfrac{\partial}{\partial x^{5}} - (x^{1})^2 \dfrac{\partial}{\partial x^{6}} + \dfrac{\partial}{\partial x^{7}} \bigg\rangle \\
& \hspace{2cm} \Longrightarrow \mathcal{E}_{5}^{\circ} = \langle dx^{3}, (x^{1})^{2} dx^{1} + dx^{6},  dx^{1} + dx^{2} - dx^{4} + dx^{5} - dx^{6} + dx^{7} \rangle;
\end{align*}
By specializing adequately the arbitrary functions, and maintaining the same choice for the function $g_{3}$, we easily obtain an Abelian,  cyclic and four-level Haantjes algebra. As in the previous example, all the operators of the family obtained in this way will share both the same eigen-distributions and the degree of the minimal polynomial. Thus, any of them generates the full algebra.

\vspace{2mm}

To illustrate the simultaneous diagonalization procedure, by way of an example we can consider the three commuting operators
\begin{equation}
\bs{K}_{1}^{(\bs{x})} =
\left[\begin{array}{ccccccc}
x^{2} & x^{1} & x^{1} & - x^{1} & x^{1} & -x^{1} & x^{1} \\
1 & 0 & x^{1} - x^{2} & 0 & 0 & - x^{1} + x^{2} & - 1 \\
0 & 0 & x^{2} & 0 & 0 & 0 & 0 \\
1 + x^{2} & x^{1} & x^{1} & - 1 - x^{1} & x^{1} & -x^{1} & x^{1} \\
0 & x^{2} & x^{2} & - x^{2} & x^{2} & - \frac{1}{x^{1}} - x^{2} & x^{2} \\
0 & 0 & x^{1} & 0 & 0 & - x^{1} + x^{2} & 0 \\
x^{2} & x^{1} & x^{1} & - 1 - x^{1} & x^{1} & -x^{1} & 1 + x^{1} \\
\end{array}\right] \ ,
\end{equation}
\begin{equation}
\bs{K}_{2}^{(\bs{x})} =
\left[\begin{array}{ccccccc}
x^{3} & x^{1} & - \frac{1}{x^{1}} & - x^{1} & x^{1} & -x^{1} & x^{1} \\
1 & x^{7} & x^{1} - x^{3} + x^{7} & x^{6} - x^{7} & 0 & - x^{1} + x^{3} - x^{7} & - 1 - x^{6} + x^{7} \\
0 & 0 & x^{3} & 0 & 0 & 0 & 0 \\
1 + x^{3} - x^{6} & x^{1} & - \frac{1}{x^{1}} & - 1 - x^{1} + x^{6} & x^{1} & -x^{1} & x^{1} \\
0 & x^{3} - x^{7} & x^{3} - x^{7} & - x^{3} + x^{7} & x^{3} & - \frac{1}{x^{1}} - x^{3} + x^{7} & x^{3} - x^{7} \\
0 & 0 & x^{1} & 0 & 0 & - x^{1} + x^{3} & 0 \\
x^{3} - x^{6} & x^{1} & - \frac{1}{x^{1}} & - 1 - x^{1} & x^{1} & -x^{1} & 1 + x^{1} + x^{6} \\
\end{array}\right] \ ,
\end{equation}
\begin{equation}
 \bs{K}_{3}^{(\bs{x})} =
\left[\begin{array}{ccccccc}
x^{4} & x^{1} & 0 & - x^{1} & x^{1} & -x^{1} & x^{1} \\
1 & 1 & 1 + x^{1} - x^{4} & -1 + x^{5} & 0 & - 1 - x^{1} + x^{4} & - x^{5} \\
0 & 0 & x^{4} & 0 & 0 & 0 & 0 \\
1 + x^{4} - x^{5} & x^{1} & 0 & - 1 - x^{1} + x^{5} & x^{1} & -x^{1} & x^{1} \\
0 & - 1 + x^{4} & - 1 + x^{4} & 1 - x^{4} & x^{4} & 1 - \frac{1}{x^{1}} - x^{4} & - 1 + x^{4} \\
0 & 0 & x^{1} & 0 & 0 & - x^{1} + x^{4} & 0 \\
x^{4} - x^{5} & x^{1} & 0 & - 1 - x^{1} & x^{1} & -x^{1} & 1 + x^{1} + x^{5} \\
\end{array}\right].
\end{equation}
These operators are $C^{\infty}(\mathbb{R})$-linearly independent. Each of them generates cyclically a level-four Haantjes  algebra $\mathscr{H}^{(4)}$ of rank 7. A basis of this algebra is given, for instance, by $\mathscr{B}=\{\bs{I}, \bs{K}_{1}, \bs{K}_{1}^2,\bs{K}_{1}^3 ,\bs{K}_{1}^4,\bs{K}_{1}^5, \bs{K}_{1}^{6} \}$. In the following subsection, we shall determine a set of coordinates which block-diagonalize the full algebra $\mathscr{H}^{(4)}$.

\subsection{Block-diagonalization} By integrating the annihilators \eqref{eq:ann7} of the eigendistributions of these operators, we get the block-separating coordinates:
\begin{equation}
\begin{aligned}
& y^{1} = x^{2} + x^{3} - x^{4} - x^{6} + x^{7}, \\
& y^{2} = x^{1} + x^{4} - 2 x^{7}, \\
& y^{3} = x^{1} - x^{4}, \\
& y^{4} = x^{3} - x^{6}, \\
& y^{5} = x^{3}, \\
& y^{6} = \frac{(x^{1})^{3}}{3} - x^{6}, \\
& y^{7} = x^{1} + x^{2} - x^{4} + x^{5} - x^{6} + x^{7}.
\end{aligned}
\end{equation}


In these coordinates, all the operators of the algebra $\mathscr{H}^{(4)}$ block-diagonalize simultaneously. In particular, by defining $\chi = \sqrt[3]{3 (y^{4} - y^{5} + y^{6})}$, the operators $\bs{K}_1$, $\bs{K}_2$, $\bs{K}_3$ take the form:
\begin{scriptsize}
\begin{equation}
\bs{K}_{1}^{(\bs{y})} =
\left[\begin{array}{ccccccc}
0 & \tempr & 0 & 0 & 0 & 0 & 0 \\ \cline{1-2}
\templ & 1 & \tempr & 0 & 0 & 0 & 0 \\ \cline{2-3}
0 & \templ  & - 1 & \tempr & 0 & 0 & 0 \\ \cline{3-4}
0 & 0  & \templ & y^{1} + \frac{1}{2} (y^{2} - y^{3}) - y^{4} - \chi & \tempr & 0 & 0  \\ \cline{4-7}
0 & 0 & 0 & \templ & y^{1} + \frac{1}{2} (y^{2} - y^{3}) - y^{4} & 0 & 0 \\
0 & 0 & 0 & \templ & \chi + \chi^{3} & y^{1} + \frac{1}{2} (y^{2} - y^{3}) - y^{4} - \chi & \chi^{3} \\
0 & 0 & 0 & \templ & \chi & - \chi^{-1}  & y^{1} + \frac{1}{2} (y^{2} - y^{3}) - y^{4} + \chi \\
\end{array}\right] \ ,
\end{equation}
\end{scriptsize}
\begin{equation}
\bs{K}_{2}^{(\bs{y})} =
\left[\begin{array}{ccccccc}
- \frac{1}{2} (y^{2} + y^{3}) + \chi & \tempr & 0 & 0 & 0 & 0 & 0 \\ \cline{1-2}
\templ & 1 - y^{4} + y^{5} & \tempr & 0 & 0 & 0 & 0 \\ \cline{2-3}
0 & \templ  & - 1 - y^{4} + y^{5} & \tempr & 0 & 0 & 0 \\ \cline{3-4}
0 & 0  & \templ & y^{5} - \chi & \tempr & 0 & 0  \\ \cline{4-7}
0 & 0 & 0 & \templ & y^{5} & 0 & 0 \\
0 & 0 & 0 & \templ & 0 & y^{5} - \chi & \chi^{3} \\
0 & 0 & 0 & \templ & - \chi^{-1} & - \chi^{-1}  & y^{5} + \chi \\
\end{array}\right] \ ,
\end{equation}
\begin{equation}
\bs{K}_{3}^{(\bs{y})} =
\left[\begin{array}{ccccccc}
1 & \tempr & 0 & 0 & 0 & 0 & 0 \\ \cline{1-2}
\templ & 1 - y^{1} + y^{5} - \chi + y^{7} & \tempr & 0 & 0 & 0 & 0 \\ \cline{2-3}
0 & \templ  & - 1 - y^{1} + y^{5} - \chi + y^{7} & \tempr & 0 & 0 & 0 \\ \cline{3-4}
0 & 0  & \templ & - y^{3} & \tempr & 0 & 0  \\ \cline{4-7}
0 & 0 & 0 & \templ & - y^{3} + \chi & 0 & 0 \\
0 & 0 & 0 & \templ & \chi & -y^{3} & \chi^{3} \\
0 & 0 & 0 & \templ & 0 & - \chi^{-1}  & - y^{3} + 2 \chi \\
\end{array}\right].
\end{equation}

Also, in the block-separation coordinates, 
the full family \eqref{eq:LFA1} takes a block diagonal form:
\begin{equation}
\bs{K}^{(\bs{y})}=
\left[\begin{array}{ccccccc}
g_{5} & \tempr & 0 & 0 & 0 & 0 & 0 \\ \cline{1-2}
\templ & g_{4} + 1 & \tempr & 0 & 0 & 0 & 0 \\ \cline{2-3}
0 & \templ  & g_{4} - 1 & \tempr & 0 & 0 & 0 \\ \cline{3-4}
0 & 0  & \templ & g_{1} - \chi & \tempr & 0 & 0  \\ \cline{4-7}
0 & 0 & 0 & \templ & g_{1} & 0 & 0 \\
0 & 0 & 0 & \templ & g_{2} \chi^{2} + \chi \left( 1 + \chi^{2} \right) & g_{1} - \chi & \chi^{3} \\
0 & 0 & 0 & \templ & g_{3} + g_{2} + \chi^{-1} \left( 1 + \chi^{2} \right) & - \chi^{-1}  & g_{1} + \chi \\
\end{array}\right].
\end{equation}

Here $g_{1} = g_{1} ( \chi, y^{1} + \frac{1}{2} (y^{2} - y^{3}) - y^{4}, y^{5} )$,  $g_{2} = g_{2} ( \chi, y^{1} + \frac{1}{2} (y^{2} - y^{3}) - y^{4}, y^{5} )$, $g_{3} = g_{3} ( \chi, y^{1} + \frac{1}{2} (y^{2} - y^{3}) - y^{4}, y^{5} )$, $g_{4} = g_{4} ( - y^{1} -  \chi + y^{5} + y^{7}, - y^{4} + y^{5} )$ and $g_{5} = g_{5} ( - \frac{1}{2} (y^{2} + y^{3})+ \chi )$ are arbitrary functions.


\section*{Acknowledgement}


This work has been partly supported by the research project PGC2018-094898-B-I00, MICINN, Spain, and by the ICMAT Severo Ochoa project SEV-2015-0554 (MICINN). D. R. N. acknowledges the financial support of EXINA S.L.

P. T. and G. T. are members of the Gruppo Nazionale di Fisica Matematica (GNFM).

\end{document}